\newcolumntype{C}[1]{>{\centering\arraybackslash}p{#1}}
\newtheorem{theorem}{Theorem}
\newtheorem{lemma}{Lemma}
\newtheorem{corollary}{Corollary}
\def\blfootnote{\xdef\@thefnmark{}\@footnotetext}
\newcommand*{\rom}[1]{\expandafter\@slowromancap\romannumeral #1@}
\begin{document}
	
\title{Performance Analysis of FAS-Aided NOMA-ISAC:\\A Backscattering Scenario}  
\author{Farshad~Rostami~Ghadi,~\IEEEmembership{Member}, \textit{IEEE}, 
             Kai-Kit~Wong,~\IEEEmembership{Fellow}, \textit{IEEE},
             F. Javier~Lopez-Martinez,~\IEEEmembership{Senior Member}, \textit{IEEE}, 
             Hyundong Shin,~\IEEEmembership{Fellow}, \textit{IEEE}, and Lajos Hanzo,~\IEEEmembership{Life Fellow}, \textit{IEEE}
\vspace{-9mm}
}

\maketitle

\begin{abstract}
This paper investigates a two-user downlink system for integrated sensing and communication (ISAC) in which the two users deploy a fluid antenna system (FAS) and adopt the non-orthogonal multiple access (NOMA) strategy. Specifically, the integrated sensing and backscatter communication (ISABC) model is considered, where a dual-functional base station (BS) serves to communicate the two users and sense a tag's surrounding. In contrast to conventional ISAC, the backscattering tag reflects the signals transmitted by the BS to the NOMA users and enhances their communication performance. Furthermore, the BS extracts environmental information from the same backscatter signal in the sensing stage. Firstly, we derive closed-form expressions for both the cumulative distribution function (CDF) and probability density function (PDF) of the equivalent channel at the users utilizing the moment matching method and the Gaussian copula. Then in the communication stage, we obtain closed-form expressions for both the outage probability and for the corresponding asymptotic expressions in the high signal-to-noise ratio (SNR) regime. Moreover, using numerical integration techniques such as the  Gauss-Laguerre quadrature (GLQ), we have series-form expressions for the user ergodic communication rates (ECRs). In addition, we get a closed-form expression for the ergodic sensing rate (ESR) using the Cram\'er-Rao lower bound (CRLB). Finally, the accuracy of our analytical results is validated numerically, and we confirm the superiority of employing FAS over traditional fixed-position antenna systems in both ISAC and ISABC. 
\end{abstract}

\begin{IEEEkeywords}
Backscatter, Cram\'er-Rao bound, ergodic capacity, fluid antenna system, integrated sensing and communication, non-orthogonal multiple access.
\end{IEEEkeywords}

\maketitle
\blfootnote{The work of F. Rostami Ghadi and K. K. Wong is supported by the Engineering and Physical Sciences Research Council (EPSRC) under Grant EP/W026813/1. The work of F. J. L\'opez-Mart\'inez  is funded in part by Junta de Andaluc\'ia through grant EMERGIA20-00297, and in part by MICIU/AEI/10.13039/50110001103 through grant PID2023-149975OB-I00 (COSTUME).}
\blfootnote{\noindent F. Rostami Ghadi and K. K. Wong are with the Department of Electronic and Electrical Engineering, University College London, London, UK. K. K. Wong is also affiliated with the Department of Electronic Engineering, Kyung Hee University, Yongin-si, Gyeonggi-do 17104, Korea. (e-mail:$\{\rm f.rostamighadi,kai\text{-}kit.wong\}@ucl.ac.uk$).}
\blfootnote{\noindent F. J. L\'opez-Mart\'inez is with the Department of Signal Theory, Networking and Communications, Research Centre for Information and Communication Technologies (CITIC-UGR), University of Granada, 18071, Granada (Spain), and also with the Communications and Signal Processing Lab, Telecommunication Research Institute (TELMA), Universidad de M\'alaga, M\'alaga, 29010, (Spain). (e-mail: $\rm fjlm@ugr.es$).}
\blfootnote{\noindent H. Shin is with the Department of Electronic Engineering, Kyung Hee University, Yongin-si, Gyeonggi-do 17104, Korea. (e-mail:$\{\rm hshin@khu.ac.kr$).}
\blfootnote{\noindent L. Hanzo is with the School of Electronics and Computer Science, University of Southampton, Southampton, U.K. (e-mail: $\rm lh@ecs.soton.ac.uk$).}
	
\blfootnote{Corresponding authors: Kai-Kit Wong, Hyundong Shin.}


\section{Introduction}\label{sec-intro}
\IEEEPARstart{T}{he rapid} proliferation of intelligent devices and the fast escalating demand for high-efficiency wireless communication present critical challenges in the next-generation of wireless networks, a.k.a.~the sixth generation (6G) \cite{Fariq-2020,You-2023}. One growing technology, which has become a key use case in 6G, is integrated sensing and communication (ISAC) \cite{liu2022integrated}. ISAC merges sensing and communication (S\&C) functionalities into a single framework. Different from conventional frequency-division sensing and communication techniques which require separate frequency bands and dedicated hardware infrastructures for joint S\&C, ISAC provides a more efficient solution in terms of spectrum, energy, and hardware usage \cite{zhang2021overview,zhang2021enabling}.

Although ISAC is appealing, it makes the provision of S\&C functionalities harder, not easier as we are required to squeeze more from what is already limited in the physical layer. As a result, in this context, there is a pressing desire to increase the degrees-of-freedom (DoF) in the physical layer. For decades, multiple-input multiple-output (MIMO) has been responsible for raising the DoF for great benefits if multiple antennas are deployed at both ends. MIMO however comes with expensive radio-frequency (RF) chains and is also subject to strict space constraints. Recently, a promising solution to overcome these issues is to adopt the fluid antenna system (FAS) technology as opposed to a traditional antenna system (TAS) \cite{Wong-2020cl,wong2020fluid}. A FAS represents the new form of reconfigurable antennas that enables shape and position flexibility \cite{wong2022bruce}. With FAS, a wireless communication channel can deliver more diversity with less space and less number of RF chains \cite{Khammassi-2023,Vega-2023,Vega-2023-2,Alvim-2023,Psomas-dec2023}. Recent efforts also studied the use of FAS at both ends of the channel \cite{New-twc2023}. Channel estimation for FAS has also become an important research problem \cite{Hao-2024,Dai-2023,Zou-2023}. In addition, artificial intelligence (AI) techniques are increasingly relevant to the design and optimization of FAS \cite{Wang-aifas2024,Waqar-2024}. The emerging movable antenna systems also fall under the category of FAS, specifying the implementation of using stepper motors \cite{Zhu-Wong-2024}. Encouraging experimental results have recently been reported to validate the promising performance of FAS \cite{Shen-tap_submit2024,Zhang-pFAS2024}.

On the other hand, backscatter communication (BC) is rising as a cost-effective technique for ultra-low-power communication technologies \cite{lu2018ambient}. BC exploits passive reflection of existing ambient RF signals to transmit data, thereby eliminating the need for an internal power source for transmission. Hence, BC has great potential to significantly enhance the efficiency and sustainability of wireless networks, facilitating widespread connectivity among a diverse array of Internet-of-Things (IoT) devices and sensors in 6G networks \cite{akyildiz20206g}. Overall, it is clear that ISAC, FAS and BC can be strategically combined for an efficient solution, which motivates the work of this paper.

\subsection{State-of-the-Art}
Some efforts have been undertaken in the context of ISAC, FAS, and BC from various aspects and some of their intersections in recent years. For example, \cite{ouyang2022performance1} analyzed the diversity order to evaluate the sensing rate (SR) and communication rate (CR) for both downlink and uplink ISAC systems. Also, \cite{ouyang2022performance} derived novel expressions for the outage probability (OP), the ergodic CR (ECR), and the SR for ISAC. By extending the results in \cite{ouyang2022performance1,ouyang2022performance} to a downlink MIMO system for ISAC, the diversity orders and high signal-to-noise ratio (SNR) slopes of the SR and CR were subsequently derived in \cite{ouyang2023mimo}. 

Besides, non-orthogonal multiple access (NOMA) was often considered in the application of ISAC because it expands the capacity region by making better use of the available DoF. Recently in \cite{liu2022performance}, the authors studied a cooperative ISAC network for non-orthogonal downlink transmission, where they characterized the exact and asymptotic OP, the ECR, and the probability of successful sensing detection. Also, the sum of ECR and the signal-to-interference plus noise ratio (SINR) of the sensing signal were maximized. In \cite{wang2022noma}, a dual-functional base station (BS) was considered to serve users in the uplink using NOMA and a beamforming design problem was studied to maximize the weighted sum of CR and the effective sensing power. Later in \cite{ouyang2023revealing}, the impact of successive interference cancellation (SIC) in NOMA-ISAC was investigated. Moreover, using the concept of semi-ISAC under NOMA, \cite{zhang2023semi} obtained analytical expressions for the OP, the ECR, and the ergodic radar estimation information rate. Most recently, the CR and SR for a near-field ISAC system was studied in \cite{zhao2024modeling}. 

On the other hand, there have also been significant research on FAS as mentioned earlier. Some recent efforts have looked into the combination of FAS and NOMA for communication \cite{new2023fluidnoma,ghadi2024performancewcn} with the latter work also considering wireless power transfer. ISAC using FAS is not well understood nonetheless, with only few recent work beginning to highlight the benefits of FAS for ISAC scenarios. For example, with FAS at the BS serving in the downlink for ISAC, the authors in \cite{zhang2024efficient} devised a proximal distance algorithm to solve the multiuser sum-rate maximization problem with a radar sensing constraint to obtain the closed-form beamforming vector, and also an extrapolated projected gradient algorithm to obtain a better antenna location configuration for FAS to enhance the ISAC performance. Most recently, machine learning techniques have also been used to optimize a downlink MIMO network with FAS at the BS for ISAC subjected to a sensing constraint \cite{Wang-fasisac2023}. Further, masked autoencoders were employed to only exploit partial channel state information (CSI) at the BS with great effect.

Comparatively, the intersection between BC and other technologies is even less explored. The only work considering both FAS and BC appears to be \cite{ghadi2024performancebc}, in which a novel closed-form expression for the cumulative distribution function (CDF) of the equivalent channel at the FAS-enabled reader was obtained and the impact of FAS was highlighted. On the other hand, the concept of integrated sensing and backscatter communication (ISABC) was proposed in \cite{galappaththige2023integrated}. Conventional ISAC integrates S\&C using active devices and passive objects, while ISABC enhances this capability by incorporating backscatter tags. The tags enable passive objects equipped with them to reflect and modulate existing signals from the ISAC system. In summary, however, FAS-aided ISABC has not been studied before. The unique contributions of our work are prominently highlighted in Table \ref{table1}, allowing for an easy comparison with existing studies. The detailed contributions of this paper are discussed in the following section.

\begin{table}\caption{Comparison of our contributions to the literature}\label{table1} \centering
	\begin{tabular}{ |C{1.2cm}|C{0.6cm}|C{0.6cm}|C{0.6cm}|C{0.9cm}|C{0.4cm}| C{0.4cm}|C{0.4cm}|  }
		\hline
		\textbf{Works} & \textbf{ISAC} & \textbf{FAS} & \textbf{BC} & \textbf{NOMA} & \textbf{SR} & \textbf{CR} & \textbf{OP}\\
		\hline
		\hspace{0.0cm}\cite{ouyang2022performance}, \cite{liu2022performance}, \hspace{-0.2cm}\cite{ouyang2023revealing}, \cite{zhang2023semi}& \vspace{0.01mm}\checkmark & \vspace{0.01mm} & \vspace{0.01mm} & \vspace{0.01mm}\checkmark & \vspace{0.01mm}\checkmark & \vspace{0.01mm}\checkmark & \vspace{0.01mm}\checkmark\\
		\hline
		\cite{zhao2024modeling}& \checkmark &  &  & & \checkmark&\checkmark&\\
		\hline
		\cite{new2023fluidnoma} &  & \checkmark &  & \checkmark&  & \checkmark &  \\
		\hline
		\cite{ghadi2024performancewcn} &  & \checkmark &  & &  &  & \checkmark\\
		\hline
		\cite{zhang2024efficient} & \checkmark & \checkmark &  &  & \checkmark & \checkmark &\\
		\hline
		\cite{Wang-fasisac2023} & \checkmark & \checkmark &  &  &  & \checkmark & \\
		\hline
		\cite{ghadi2024performancebc} &  & \checkmark & \checkmark & &  &  & \checkmark \\
		\hline
		\cite{galappaththige2023integrated} & \checkmark &  & \checkmark &  & \checkmark & \checkmark &  \\
		\hline
		Proposed & \checkmark & \checkmark & \checkmark & \checkmark & \checkmark & \checkmark & \checkmark\\
		\hline
	\end{tabular}
	\begin{tablenotes}
	\item \textbf{Table Notation.} The proposed work uniquely integrates all listed features (ISAC, FAS, BC, NOMA, SR, CR, and OP), providing a comprehensive approach not found in any of the other reviewed works. 
	\end{tablenotes}
\end{table}
\subsection{Motivation and Contributions}
The beauty of ISAC for simultaneous S\&C with lessened hardwares and shared resources, the new DoF enabled by FAS and the great potential of BC in utilizing ambient signals, have motivated us to investigate the synergy between ISAC, FAS, and BC. The synergy of these technologies promises a robust, adaptable, and efficient communication framework ideal for supporting the demanding requirements of next-generation IoT and performance-critical wireless networks. Motivated by the above, we consider an ISABC scenario adopting the NOMA technology, where a backscatter tag serves as the sensing target and the communication users are equipped with FAS. The tag not only provides sensing information to the BS but transmits additional data to the FAS-equipped NOMA users. 

Specifically, our main contributions are as follows:
\begin{itemize}
\item First of all, we characterize both the CDF and probability density function (PDF) of the ISABC equivalent channel at the FAS-equipped NOMA users, exploiting the moment matching technique and copula theory. 
\item Then, in the communication stage, we derive the closed-form expressions for the OP, as well as their respective asymptotic expressions in the high SNR regime. Besides, we provide integral-form expressions of the ECR and approximate them using numerical integration techniques such as the Gauss-Laguerre quadrature (GLQ). 
\item Afterwards, we obtain a closed-form expression for the ergodic SR (ESR) in the sensing stage, using the Cram\'er-Rao lower bound (CRLB).
\item Finally, we evaluate the accuracy of our theoretical results and compare them to several benchmarks. Our findings indicate enhanced overall performance, demonstrating the superiority of FAS over TAS in both ISAC and ISABC scenarios. Additionally, we provide an accurate trade-off between SR and CR to gain further insights into the performance of the proposed system.
\end{itemize}

\subsection{Paper Organization}
The remainder of this paper is organized as follows. Section \ref{sec-sys} introduces the system model of ISABC for both the S\&C stages. The statistical characteristics of the equivalent channel, including communication performance such as OP and ECR analysis, and sensing performance such as ESR analysis, are presented in Section \ref{sys-perf}. Section \ref{sec-num} provides the numerical results, and finally, Section \ref{sec-con} concludes this paper.

\subsection{Mathematical Notations}
Throughout, we use boldface upper and lower case letters $\mathbf{X}$ and $\mathbf{x}$ for matrices and column vectors, respectively. $\mathbb{E}\left[\cdot\right]$ and $\mathrm{Var\left[\cdot\right]}$ denote the mean and variance operators, respectively. Moreover, $\left(\cdot\right)^T$,  $\left(\cdot\right)^{-1}$, $\left|\cdot\right|$, and $\mathrm{det}\left(\cdot\right)$ stand for the transpose, inverse, magnitude, and determinant, respectively.

\begin{figure*}[!t]
\centering
\includegraphics[width=1.3\columnwidth]{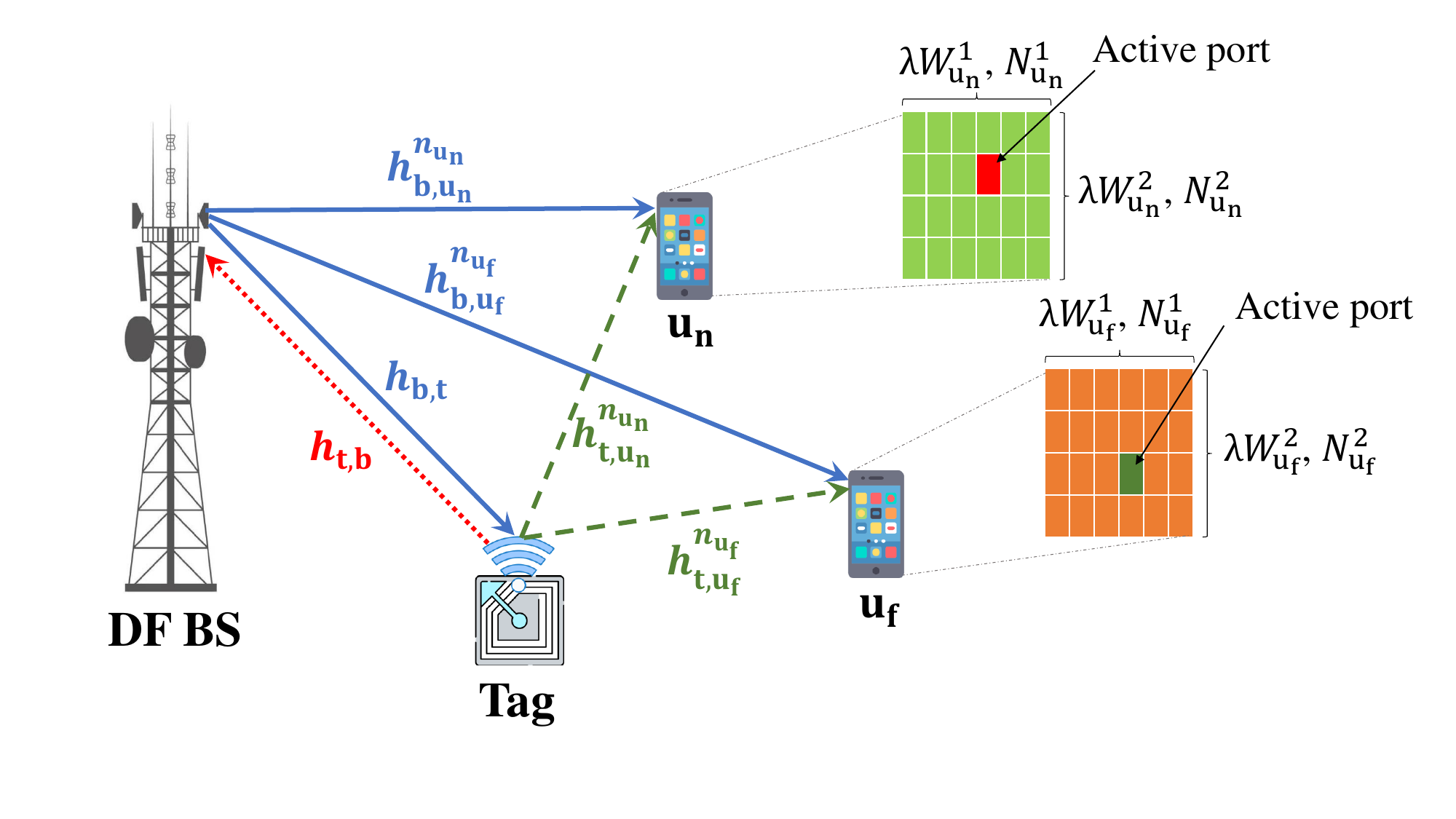}
\caption{The system model of ISABC with two FAS-aided users and one backscatter tag.}\label{fig-model}
\vspace{-2mm}
\end{figure*}
\section{System and Channel Models}\label{sec-sys}
\subsection{System Model}
As shown in Fig.~\ref{fig-model}, we consider a wireless communication system, where a dual-functional BS communicates with a pair of NOMA users, i.e., one near (strong) user $\mathrm{u_n}$ and one far (weak) user $\mathrm{u_f}$, while simultaneously broadcasts a sensing waveform to sense a single fixed-position antenna tag. Both users $\mathrm{u_n}$ and $\mathrm{u_f}$ are equipped with a planar FAS, while the BS includes two fixed-position antennas, where one is for transmitting and the other is for receiving signals. In this regard, we assume that the BS antennas are spatially well separated in order to cancel the self-interference. The FAS-equipped users  $i\in\left\{\mathrm{u_n,u_f}\right\}$, consist of a grid structure with $N^l_i$ ports that are uniformly distributed along a linear space of length $W^l_i\lambda$ for $l\in\left\{1,2\right\}$, i.e., $N_i=N^1_i\times N^2_i$ and $W_i=W^1_i\lambda\times W^2_i\lambda$. Additionally, a mapping function $\mathcal{F}\left(n_i\right)=\left(n_i^1,n_i^2\right)$, $n_i=\mathcal{F}^{-1}\left(n_i^1,n_i^2\right)$ is used to conveniently transform the two-dimensional (2D) indices to a one-dimensional (1D) index, so $n_i\in\left\{1,\dots,N_i\right\}$ and $n_i^l\in\left\{1,\dots,N_i^l\right\}$. Under this model, the BS transmits a signal $x(t)$ at the $t$-th time slot for both communication and sensing.\footnote{From a sensing perspective, $x(t)$ represents the radar snapshot transmitted at the $t$-th time slot, and for communications, it is the $t$-th data symbol \cite{ouyang2022performance1}.} The total power has to be split into two parts according to $\mu_\chi\in\left[0,1\right]$ for $\chi\in\left\{\mathrm{c,s}\right\}$, with one part for sensing and the other for communications, i.e., $\mu_\mathrm{c}+\mu_\mathrm{s} = 1$, in which the subscripts `$\mathrm{c}$' and `$\mathrm{s}$' specify the splitting factor for communications and sensing, respectively.

\subsection{Signal and Channel Models}
\subsubsection{Communication stage} 
The received signal at the $k_i$-th port of user $i$ via the tag at the $t$-th time slot is given by
\begin{align}\label{eq-yc}
y_{i}^{n_i}(t) = \sqrt{ P_\mathrm{b}\mu_\mathrm{c}}h^{n_i}_{\mathrm{eq},i}\left[\sqrt{p_\mathrm{u_n}}x_\mathrm{u_n}(t)+\sqrt{p_\mathrm{u_f}}x_\mathrm{u_f}(t)\right]+z_{i}^{n_i}(t),
\end{align}
in which $P_\mathrm{b}$ denotes the transmit power and $p_i$ is the power allocation factor satisfying $p_\mathrm{u_n}+p_\mathrm{u_f}=1$. Moreover, $x_{i}(t)$ denotes the symbol for user $i$ with $\mathbb{E}\left[|x_{i}(t)|^2\right]=1$. The term   $h^{n_i}_{\mathrm{eq},i}=d_{\mathrm{b},i}^{-\alpha} h_{\mathrm{b},i}^{n_i}+\zeta d_{\mathrm{b,t}}^{-\alpha}d_{\mathrm{t},i}^{-\alpha} h_\mathrm{b,t}h^{n_i}_{\mathrm{t},i}$ represents the equivalent channel (i.e., the sum of the direct and  effective backscatter channels) at the $n_i$-th port of user $i$, where $\zeta\in\left[0,1\right]$ is the tag's reflection coefficient,  $h^{n_i}_{\mathrm{b},i}$ defines the channel coefficient between the BS and the $n_i$-th port at the FAS-equipped user $i$, 
 $h_\mathrm{b,t}$ represents the channel coefficient between the BS and the tag, and  $h^{n_i}_{\mathrm{t},i}$ is the channel coefficient between the tag and the $n_i$-th port of user $i$. Additionally, $d_{\mathrm{b},i}$, $d_{\mathrm{b,t}}$, and $d_{\mathrm{t},i}$ represent, respectively, the distances from the BS to user $i$, the BS to the tag, and the tag to user $i$. Moreover, $\alpha>2$ denotes the path-loss exponent. Furthermore, $z_{i}^{n_i}(t)\sim\mathcal{CN}\left(0,\sigma^2\right)$ defines the additive white Gaussian noise (AWGN) with zero mean and variance $\sigma^2$ at the $n_i$-th port of user $i$. 

\subsubsection{Sensing stage} Given that the backscattered signal at the tag is received by both users and the BS, the BS utilizes this signal to extract environmental information. In this context, we assume that the BS knows the type of pulse sent to the tag and has previously gathered observations to estimate the predicted range of the tag's position. If pulses are consistently transmitted to the tag at a fixed frequency, the BS can calculate the predicted echo using the prior observations. It is worth noting that uncertainty in the positioning is directly related to time delay fluctuations in radar systems \cite{chiriyath2015inner}. Therefore, under these assumptions, the time delay fluctuation $t_\mathrm{df}$ follows a Gaussian distribution with the variance of $\sigma^2_{t_\mathrm{df}}=\mathbb{E}\left[\left|t_\mathrm{df}-t_\mathrm{pre}\right|^2\right]$, in which $t_\mathrm{pre}$ is the predicted value of $t_\mathrm{df}$. As a result, the average power level of the echo signal, considering the uncertainty in the positioning decision, is formulated as \cite{chiriyath2015inner}
\begin{align}
\sigma^2_\mathrm{echo}=\mathbb{E}\left[\left|x\left(t-t_\mathrm{df}\right)-x\left(t-t_\mathrm{pre}\right)\right|^2\right]\approx\frac{\pi^2}{3}\sigma^2_{t_\mathrm{df}}.
\end{align}
As such, by transmitting the signal $x(t)$ to sense the tag surrounding environment, the BS receives the following reflected echo signal at its receive antenna, i.e., 
\begin{align}\label{eq-eco}
y_{\mathrm{b}}(t)=\sqrt{P_\mathrm{b}\mu_\mathrm{s}\zeta}h_{\mathrm{tsr,t}}\left[x\left(t-t_\mathrm{df}\right)-x\left(t-t_\mathrm{pre}\right)\right]+z_{\mathrm{b}}(t),
\end{align}
in which $z_\mathrm{b}(t)\sim\mathcal{CN}\left(0,\sigma^2\right)$ is the AWGN at the BS. Additionally, $h_{\mathrm{tsr,t}}=h_{\mathrm{t,b}}h_{\mathrm{b,t}}$ denotes the tag sensing response (TSR) that consists of both the channel from the BS to the tag and that from the tag to the BS. 

Furthermore, the ports at the FAS of each user can freely switch to a favourable position but they are in close proximity to each other, thus exhibiting spatial correlation in the channel coefficients. Assuming all links undergo Rayleigh fading channels, the covariance between two arbitrary ports $n_{i}$ and $\tilde{n}_{i}$ at each user $i$ in a three-dimensional (3D) environment with rich scattering can be defined as
\begin{align}
\varrho_i=j_0\left(2\pi\sqrt{\left(\frac{n^1_i-\tilde{n}^1_i}{N_i^1-1}W_i^1\right)^2+\left(\frac{n^2_i-\tilde{n}^2_i}{N_i^2-1}W_i^2\right)^2}\right),
\end{align}
where $\tilde{n}_{i}=\mathcal{F}^{-1}\left(\tilde{n}_{i}^1,\tilde{n}_{i}^2\right)$ for $\tilde{n}_{i}^l\in\left\{1,\dots,N_i^l\right\}$ and $j_0(.)$ is  the zero-order spherical Bessel function of the first kind.

\subsection{SINR Characterization}
Given that NOMA is used for the users, the optimal decoding technique involves implementing SIC according to the near FAS-equipped user $i$. To that end, the near user first decodes the signal transmitted to the far user treating its own signal as noise. 
Therefore, assuming that only the optimal port that maximizes the received SINR at the FAS-equipped users is activated, the SINR of the SIC process can be expressed as
\begin{align}\label{eq-sinr-sic}
\gamma_{\mathrm{sic}}=\frac{\overline{\gamma}p_\mathrm{u_f}\mu_\mathrm{c}\left|h_{\mathrm{eq,u_n}}^{n_\mathrm{u_n}^*}\right|^2}{\overline{\gamma}p_\mathrm{u_n}\mu_\mathrm{c}\left|h_{\mathrm{eq,u_n}}^{n_\mathrm{u_n}^*}\right|^2+1},
\end{align}
in which $\overline{\gamma}=\frac{P_\mathrm{b}}{\sigma^2}$ denotes the average SNR. Then, $\mathrm{u_n}$ removes the message of $\mathrm{u_f}$ from its received signal and decodes its own required information. As a consequence, the received SNR at the FAS-equipped user $\mathrm{u_n}$ is expressed as
\begin{align}\label{eq-snr-un}
\gamma_{\mathrm{u_n}}=\overline{\gamma}p_\mathrm{u_n}\mu_\mathrm{c}\left|h_{\mathrm{eq,u_n}}^{n_\mathrm{u_n}^*}\right|^2.
\end{align}
Simultaneously, the far user $\mathrm{u_f}$ directly decodes its own signal but lacks the capability to filter out the signal from the near user $\mathrm{u_n}$ within the combined transmitted message. This implies that the far FAS-equipped user $\mathrm{u_f}$ must decode its signal directly by treating interference as noise. Thus, the received SINR at the far FAS-equipped user $\mathrm{u_f}$ is given by
\begin{align}
\gamma_{\mathrm{u_f}}=\frac{\overline{\gamma}p_\mathrm{u_f}\mu_\mathrm{c}\left|h_{\mathrm{eq,u_f}}^{n_\mathrm{u_f}^*}\right|^2}{\overline{\gamma}p_\mathrm{u_n}\mu_\mathrm{c}\left|h_{\mathrm{eq,u_f}}^{n_\mathrm{u_f}^*}\right|^2+1}.
\end{align}
Given the FAS at both users, we denote $n_i^*$ as the index of the best port at user $i$, given by
\begin{align}
n_i^*=\underset{n}{\arg\max}\left\{\left|h_{\mathrm{eq},i}^{n_i}\right|^2\right\}.
\end{align} 
Thus, the channel gain at the FAS-equipped user $i$ is given by
\begin{align}\label{eq-g-fas}
g_{\mathrm{fas},i}=\max\left\{\left|h_{\mathrm{eq},i}^1\right|^2,\dots,\left|h_{\mathrm{eq},i}^{N_i}\right|^2\right\},
\end{align}
in which $g_{\mathrm{eq},i}^{n_i}=\left|h_{\mathrm{eq},i}^{n_{i}}\right|^2$.

\section{Performance Analysis}\label{sys-perf}
Here, we first derive the CDF and PDF of the equivalent channel at the FAS-equipped NOMA users. Then, in the communication stage, we obtain compact analytical expressions for the OP and the corresponding asymptotic results, as well as derive the ECR. Additionally, we derive the closed-from expression for the ESR in the sensing stage. 

\subsection{Statistical Characteristics}
Given \eqref{eq-yc} and \eqref{eq-g-fas}, the equivalent channel at both users over the communication stage includes the maximum of $N_{i}$ correlated random variables (RVs), each of which includes the sum of two independent RVs, where one follows a Rayleigh distribution and the other one is the product of two independent exponentially-distributed RVs. In this regard, we exploit a flexible statistical approach in the statistics theory called Sklar's theorem, which can generate the joint CDF of many arbitrarily correlated RVs, e.g., $N_{i}$, beyond linear dependency by only using the marginal distribution of each RV and a specific function. For this purpose, let $\mathbf{s}=\left[S_1,\dots,S_d\right]$ be a vector of $d$ arbitrary correlated RVs having the univariate marginal CDF $F_{S_j}(s_j)$ and joint multivariate CDF $F_{S_1,\dots, S_d}\left(s_1,\dots,s_d\right)$ for $j\in\left\{1,\dots,d\right\}$, respectively. Then, in the extended real line domain $\overline{\mathbb{R}}$, the corresponding joint CDF is given by \cite{ghadi2020copula}
\begin{align}\label{eq-sklar}
F_{S_1,\dots S_N}\left(s_1,\dots,s_d\right)=C\left(F_{S_1}(s_1),\dots,F_{S_d}(s_d);\vartheta\right), 
\end{align}
in which $C\left(.\right):\left[0,1\right]^d\rightarrow\left[0,1\right]$ denotes the copula function that is a joint CDF of $d$ random vectors on the unit cube $\left[0,1\right]^d$ with uniform marginal distributions, i.e., \cite{ghadi2020copula}
\begin{align}
C\left(u_1,\dots u_d;\vartheta\right)=\Pr\left(U_1\leq u_1,\dots,U_d\leq u_d\right),
\end{align}
where $u_j=F_{S_j}(s_j)$ and $\vartheta$ denotes the copula dependence parameter, which measures the structure of dependency between two arbitrarily correlated RVs. Therefore, by choosing an appropriate copula function, the respective joint CDF can be derived. It has been shown in \cite{ghadi2023gaussian} that although there are different types of copula families that can be used for describing unknown dependence structures, the spatial correlation between two arbitrary ports of a FAS can be accurately modeled using the Gaussian copula, especially when the fluid antenna size is large enough. In particular, the correlation coefficient of Jake's model is accurately approximated by the dependence parameter of the  Gaussian copula for any arbitrarily correlated fading channels.\footnote{This approximation was validated through the rank correlation coefficients such as Spearman's $\rho_\mathrm{s}$ and Kendall's $\tau_\mathrm{k}$ in terms of $\varrho_i$, and comparing the scatterplots of the Gaussian copula and Jake's model, i.e.,  $\varrho_i\approx \eta_i$ \cite{ghadi2023gaussian}.} The multivariate Gaussian copula associated with the correlation matrix $\mathbf{R}$ can be found by
\begin{align}
C_\mathrm{G}\left(u_1,\dots,u_d;\eta\right)=\mathbf{\Phi}_\mathbf{R}\left(\varphi^{-1}\left(u_1\right),\dots,\varphi^{-1}\left(u_d\right);\eta\right),
\end{align}
in which $\varphi^{-1}(u_q)=\sqrt{2}\mathrm{erf}^{-1}\left(2u_q-1\right)$ defines the inverse CDF (quantile function) of the standard normal distribution, with $\mathrm{erf}^{-1}(\cdot)$ denoting the inverse of the error function $\mathrm{erf}(s)=\frac{2}{\sqrt{\pi}}\int_0^s \mathrm{e}^{-t^2}\mathrm{d}t$. The term $\mathbf{\Phi}_\mathbf{R}$ denotes the joint CDF of the multivariate normal distribution with a zero mean vector and a correlation matrix $\mathbf{R}$. Also, the term $\eta\in(-1,1)$ denotes the dependence parameter of the Gaussian copula, which measures and controls the degree of dependence between correlated RVs. Moreover, using the chain rule, the density function $c_\mathrm{G}$ of the Gaussian copula is expressed as \cite{ghadi2023gaussian}
\begin{align}
c_\mathrm{G}\left(u_1,\dots,u_d;\eta\right)=\frac{\exp\left(-\frac{1}{2}\left(\boldsymbol{\varphi}^{-1}\right)^T\left(\mathbf{R}^{-1}-\mathbf{I}\right)\boldsymbol{\varphi}^{-1}\right)}{\sqrt{{\rm det}\left(\mathbf{R}\right)}},\label{eq-copula-g}
\end{align}
where $\boldsymbol{\varphi}^{-1}$  denotes a vector containing the quantile function of the standard normal distribution $\varphi^{-1}\left(u_j\right)$, $\mathrm{det}\left(\mathbf{R}\right)$ is the determinant of $\mathbf{R}$, and $\mathbf{I}$ is the identity matrix.

\begin{theorem}\label{thm-cdf}
The CDF of the channel gain $g_{\mathrm{fas},i}$ at the FAS-equipped NOMA user $i$ under correlated Rayleigh fading is given by \eqref{eq-cdf} (see top of this page), in which $\Upsilon\left(x,y\right)$ denotes the lower incomplete gamma function, 
\begin{figure*}
\begin{align}
F_{\mathrm{fas},i}\left(g_\mathrm{fas}\right)=\Phi_{\mathbf{R}_i}\left(\sqrt 2\mathrm{erf}^{-1}\left(\frac{2}{\Gamma\left(\varkappa_i\right)}\Upsilon\left(\varkappa_i,\frac{g_\mathrm{fas}}{\varpi_i}\right)-1\right),\dots,\sqrt 2\mathrm{erf}^{-1}\left(\frac{2}{\Gamma\left(\varkappa_i\right)}\Upsilon\left(\varkappa_i,\frac{g_\mathrm{fas}}{\varpi_i}\right)-1\right);\eta_i\right)\label{eq-cdf}
\end{align}
\hrulefill
\end{figure*}
\begin{align}
\varkappa_i=\frac{d_{\mathrm{b},i}^{-2\alpha}\overline{a}^2+3\left(\zeta d_{\mathrm{b,t}}^{-\alpha}d_{\mathrm{t},i}^{-\alpha}\overline{b}\overline{c}\right)^2}{d_{\mathrm{b},i}^{-\alpha}\overline{a}+\zeta d_{\mathrm{b,t}}^{-\alpha}d_{\mathrm{t},i}^{-\alpha}\overline{b}\overline{c}}
\end{align}
and
\begin{align} 
\varpi_i=\frac{\left(d_{\mathrm{b},i}^{-\alpha}\overline{a}+\zeta d_{\mathrm{b,t}}^{-\alpha}d_{\mathrm{t},i}^{-\alpha}\overline{b}\overline{c}\right)^2}{d_{\mathrm{b},i}^{-2\alpha}\overline{a}^2+3\left(\zeta d_{\mathrm{b,t}}^{-\alpha}d_{\mathrm{t},i}^{-\alpha}\overline{b}\overline{c}\right)^2},
\end{align}
where $\overline{a}=\mathbb{E}\left[g^{n_i}_{\mathrm{b},i}\right]$, $\overline{b}=\mathbb{E}\left[g_\mathrm{b,t}\right]$, and $\overline{c}=\mathbb{E}\left[g^{n_i}_{\mathrm{t},i}\right]$. Moreover, the term $\mathbf{\Phi}_{\mathbf{R}_i}(\cdot)$ represents the joint CDF of the multivariate normal distribution with zero mean vector and the following correlation matrix
\begin{align}
\mathbf{R}_i= \begin{bmatrix}
1 & \eta_{1,2}^i & \hdots & \eta_{1,N_i}^i\\
\eta_{2,1}^i & 1 & \hdots & \eta_{2,N_i}^i\\
\vdots & \vdots & \ddots& \vdots\\
\eta_{N_i,1}^i & \eta_{N_i,2}^i & \hdots& 1\\
\end{bmatrix}.
\end{align}
\end{theorem}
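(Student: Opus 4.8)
The plan is to build the CDF of $g_{\mathrm{fas},i}$ in three stages: (i) find the marginal distribution of a single port gain $g_{\mathrm{eq},i}^{n_i} = |h_{\mathrm{eq},i}^{n_i}|^2$ via moment matching; (ii) use Sklar's theorem with the Gaussian copula to assemble the joint CDF of all $N_i$ correlated port gains; and (iii) take the maximum. The third step is immediate: since $g_{\mathrm{fas},i} = \max_{n_i}\{g_{\mathrm{eq},i}^{n_i}\}$, we have $F_{\mathrm{fas},i}(g_\mathrm{fas}) = \Pr(g_{\mathrm{eq},i}^1 \le g_\mathrm{fas},\dots,g_{\mathrm{eq},i}^{N_i}\le g_\mathrm{fas})$, which is exactly the joint CDF evaluated with every argument equal to $g_\mathrm{fas}$; this is why all the arguments of $\mathbf{\Phi}_{\mathbf{R}_i}$ in \eqref{eq-cdf} coincide.

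For step (i), recall $h_{\mathrm{eq},i}^{n_i} = d_{\mathrm{b},i}^{-\alpha} h_{\mathrm{b},i}^{n_i} + \zeta d_{\mathrm{b,t}}^{-\alpha} d_{\mathrm{t},i}^{-\alpha} h_\mathrm{b,t} h_{\mathrm{t},i}^{n_i}$, so $g_{\mathrm{eq},i}^{n_i}$ is the squared magnitude of a sum of a (scaled) complex Gaussian term and a (scaled) product of two complex Gaussians. Its exact distribution is intractable, so the approach is to approximate $g_{\mathrm{eq},i}^{n_i}$ by a Gamma RV with shape $\varkappa_i$ and scale $\varpi_i$, chosen so that the first two moments match. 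I would compute $\mathbb{E}[g_{\mathrm{eq},i}^{n_i}]$ and $\mathbb{E}[(g_{\mathrm{eq},i}^{n_i})^2]$ in closed form: the direct and backscatter terms are independent and zero-mean, so the cross terms vanish in the mean, giving $\mathbb{E}[g_{\mathrm{eq},i}^{n_i}] = d_{\mathrm{b},i}^{-2\alpha}\overline a + (\zeta d_{\mathrm{b,t}}^{-\alpha} d_{\mathrm{t},i}^{-\alpha})^2 \overline b\,\overline c$ (using $\mathbb{E}[g_{\mathrm{b},i}^{n_i}]=\overline a$ etc. and independence of $g_\mathrm{b,t}, g_{\mathrm{t},i}^{n_i}$), and similarly the second moment involves the known fourth moments of Rayleigh/exponential RVs (e.g. $\mathbb{E}[(g_{\mathrm{b},i}^{n_i})^2] = 2\overline a^2$, $\mathbb{E}[(g_\mathrm{b,t} g_{\mathrm{t},i}^{n_i})^2] = 4\overline b^2 \overline c^2$). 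Matching $\varkappa_i \varpi_i$ to the mean and $\varkappa_i \varpi_i^2$ to the variance then yields exactly the stated $\varkappa_i$ and $\varpi_i$. Consequently the marginal CDF is $F_{g_{\mathrm{eq},i}^{n_i}}(x) = \frac{1}{\Gamma(\varkappa_i)}\Upsilon(\varkappa_i, x/\varpi_i)$, the regularized lower incomplete gamma function.

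For step (ii), I invoke \eqref{eq-sklar} with the Gaussian copula $C_\mathrm{G}$: the joint CDF of the $N_i$ port gains is $\mathbf{\Phi}_{\mathbf{R}_i}(\varphi^{-1}(u_1),\dots,\varphi^{-1}(u_{N_i}); \eta_i)$ where $u_{n_i} = F_{g_{\mathrm{eq},i}^{n_i}}(g_\mathrm{fas})$. Substituting the marginal from step (i) and $\varphi^{-1}(u) = \sqrt 2\,\mathrm{erf}^{-1}(2u-1)$, each argument becomes $\sqrt 2\,\mathrm{erf}^{-1}\!\big(\tfrac{2}{\Gamma(\varkappa_i)}\Upsilon(\varkappa_i, g_\mathrm{fas}/\varpi_i) - 1\big)$, and the correlation matrix $\mathbf{R}_i$ is the Jakes covariance matrix (entries $\varrho_i$) approximated by the copula parameter matrix (entries $\eta_{j,k}^i$), per the cited result $\varrho_i \approx \eta_i$. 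Combining with step (iii) gives precisely \eqref{eq-cdf}.

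The main obstacle is step (i) — specifically the second-moment computation of $g_{\mathrm{eq},i}^{n_i}$, since the product term $h_\mathrm{b,t} h_{\mathrm{t},i}^{n_i}$ is a cascaded (double-Rayleigh) channel whose magnitude-squared has heavier tails than an exponential, and one must carefully expand $\mathbb{E}[|A + B|^4]$ where $A$ is the direct component and $B$ the cascaded one, tracking which mixed moments survive. Everything else is a mechanical application of Sklar's theorem and the order statistic identity. I should also note that the Gamma moment-matching is an approximation, so the theorem's CDF is approximate despite being written as an equality; this is standard in the FAS literature and is justified empirically by the numerical validation later in the paper.
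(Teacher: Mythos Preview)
Your three-stage plan (Gamma moment-matching for the marginal, Gaussian copula for the joint, order-statistic identity for the maximum) is exactly the paper's strategy, and your steps (ii)--(iii) match Appendix~\ref{app-thm-cdf} essentially line for line.

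The discrepancy is in step~(i). You propose to compute the first two moments of $g_{\mathrm{eq},i}^{n_i}=|h_{\mathrm{eq},i}^{n_i}|^2$, i.e., of the squared magnitude of the \emph{sum of complex amplitudes}. That would give a mean of the form $d_{\mathrm{b},i}^{-2\alpha}\overline a + \zeta^2 d_{\mathrm{b,t}}^{-2\alpha}d_{\mathrm{t},i}^{-2\alpha}\overline b\,\overline c$ (path-loss and reflection coefficient squared), which does \emph{not} reproduce the stated $\varkappa_i,\varpi_i$: note that the denominator of $\varkappa_i$ in the theorem carries $d_{\mathrm{b},i}^{-\alpha}$ and $\zeta$ to the first power. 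The paper instead \emph{models} the equivalent gain directly as a sum of independent power-domain terms,
\[
g_{\mathrm{eq},i}^{n_i}=d_{\mathrm{b},i}^{-\alpha}g_{\mathrm{b},i}^{n_i}+\zeta d_{\mathrm{b,t}}^{-\alpha}d_{\mathrm{t},i}^{-\alpha}g_{\mathrm{b,t}}g_{\mathrm{t},i}^{n_i}=A'+D,
\]
with $A'$ a scaled exponential and $D$ a scaled product of two exponentials. It even derives the exact CDF of $D$ via \cite[3.471.9]{gradshteyn2007table} (a $\mathcal{K}_1$ Bessel form) before abandoning the convolution $F_{A'+D}$ as intractable. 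The moment match is then trivial: $\mathbb{E}[A']=d_{\mathrm{b},i}^{-\alpha}\overline a$, $\mathrm{Var}(A')=d_{\mathrm{b},i}^{-2\alpha}\overline a^2$, $\mathbb{E}[D]=\zeta d_{\mathrm{b,t}}^{-\alpha}d_{\mathrm{t},i}^{-\alpha}\overline b\,\overline c$, $\mathrm{Var}(D)=3(\zeta d_{\mathrm{b,t}}^{-\alpha}d_{\mathrm{t},i}^{-\alpha}\overline b\,\overline c)^2$, and adding means/variances gives the stated parameters directly. So your claim ``yields exactly the stated $\varkappa_i$ and $\varpi_i$'' is false under your amplitude-sum interpretation; to recover the theorem you must adopt the paper's gain-sum model for $g_{\mathrm{eq},i}^{n_i}$, which is an additional modeling simplification on top of the Gamma approximation you already flagged. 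Your fourth-moment expansion of $|A+B|^4$ is therefore unnecessary here.
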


\begin{proof}
See Appendix \ref{app-thm-cdf}.
\end{proof} 

\begin{theorem}\label{thm-pdf}
The PDF of the channel gain  $g_{\mathrm{fas},i}$ at the FAS-equipped NOMA user $i$ under correlated Rayleigh fading is given by \eqref{eq-pdf} (see top of next page), in which
\begin{figure*}
\begin{align}
f_{\mathrm{fas},i}\left(g_\mathrm{fas}\right)=\frac{\left[\frac{1}{\Gamma\left(\varkappa_i\right)\varpi_i^{\varkappa_i}}g_\mathrm{fas}^{\varkappa_i-1}\mathrm{e}^{-\frac{g_\mathrm{fas}}{\varpi_i}}\right]^{N_i}}{\sqrt{{\rm det}\left(\mathbf{R}_i\right)}}\exp\left(-\frac{1}{2}\left(\boldsymbol{\varphi}^{-1}_{g_{\mathrm{eq},i}^{N_i}}\right)^T\left(\mathbf{R}^{-1}_i-\mathbf{I}\right)\boldsymbol{\varphi}^{-1}_{g_{\mathrm{eq},i}^{N_i}}\right)\label{eq-pdf}
\end{align}
\hrulefill
\end{figure*}
\begin{multline}
\boldsymbol{\varphi}^{-1}_{g_{\mathrm{eq},i}^{N_i}}=\left[\sqrt 2\mathrm{erf}^{-1}\left(\frac{2}{\Gamma\left(\varkappa_i\right)}\Upsilon\left(\varkappa_i,\frac{g_\mathrm{fas}}{\varpi_i}\right)-1\right),\dots,\right.\notag\\
\left.\sqrt 2\mathrm{erf}^{-1}\left(\frac{2}{\Gamma\left(\varkappa_i\right)}\Upsilon\left(\varkappa_i,\frac{g_\mathrm{fas}}{\varpi_i}\right)-1\right)\right].
\end{multline}
\end{theorem}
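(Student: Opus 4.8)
The plan is to obtain the PDF of $g_{\mathrm{fas},i}$ by differentiating the CDF in \eqref{eq-cdf}, or equivalently — and more cleanly — by invoking Sklar's theorem at the density level. Since $g_{\mathrm{fas},i}=\max\{g_{\mathrm{eq},i}^1,\dots,g_{\mathrm{eq},i}^{N_i}\}$ is the maximum of $N_i$ correlated RVs, its CDF is the joint CDF of the vector $(g_{\mathrm{eq},i}^1,\dots,g_{\mathrm{eq},i}^{N_i})$ evaluated at the common point $(g_\mathrm{fas},\dots,g_\mathrm{fas})$, i.e. $F_{\mathrm{fas},i}(g_\mathrm{fas})=F_{g_{\mathrm{eq},i}^1,\dots,g_{\mathrm{eq},i}^{N_i}}(g_\mathrm{fas},\dots,g_\mathrm{fas})$. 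Differentiating this with respect to $g_\mathrm{fas}$ brings down, by the chain rule, a sum over all $N_i$ arguments of the corresponding partial derivatives of the joint CDF. The key simplification is that, because all arguments are set equal and the moment-matched marginals are identical Gamma$(\varkappa_i,\varpi_i)$ distributions, each of these $N_i$ terms coincides, and together they reconstruct the full mixed partial derivative of the joint CDF — which is precisely the joint PDF. This is the standard ``derivative of the max equals the joint density integrated appropriately'' identity; here it collapses to evaluating the joint PDF on the diagonal.

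Concretely, the steps are as follows. First I would recall from the proof of Theorem~\ref{thm-cdf} (Appendix~\ref{app-thm-cdf}) that the marginal of each $g_{\mathrm{eq},i}^{n_i}$ is approximated via moment matching by a Gamma distribution with shape $\varkappa_i$ and scale $\varpi_i$, so the marginal PDF is $f_{g_{\mathrm{eq},i}^{n_i}}(g)=\frac{1}{\Gamma(\varkappa_i)\varpi_i^{\varkappa_i}}g^{\varkappa_i-1}\mathrm{e}^{-g/\varpi_i}$ and the marginal CDF is $\frac{1}{\Gamma(\varkappa_i)}\Upsilon(\varkappa_i,g/\varpi_i)$. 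Second, I would write the joint CDF of the port gains through the Gaussian copula, $F_{g_{\mathrm{eq},i}^1,\dots,g_{\mathrm{eq},i}^{N_i}}=C_\mathrm{G}(F_{g_{\mathrm{eq},i}^1},\dots,F_{g_{\mathrm{eq},i}^{N_i}};\eta_i)$, and then form its joint PDF by the chain rule for copula densities: $f_{g_{\mathrm{eq},i}^1,\dots,g_{\mathrm{eq},i}^{N_i}}(g_1,\dots,g_{N_i})=c_\mathrm{G}(F_{g_{\mathrm{eq},i}^1}(g_1),\dots;\eta_i)\prod_{n_i=1}^{N_i}f_{g_{\mathrm{eq},i}^{n_i}}(g_{n_i})$, using the Gaussian copula density \eqref{eq-copula-g}. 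Third, I would evaluate this joint PDF on the diagonal $g_1=\dots=g_{N_i}=g_\mathrm{fas}$: the product of $N_i$ identical marginal PDFs yields the bracketed $N_i$-th power term in \eqref{eq-pdf}, and the Gaussian copula density evaluated there yields the $\exp(-\tfrac12(\boldsymbol{\varphi}^{-1})^T(\mathbf{R}_i^{-1}-\mathbf{I})\boldsymbol{\varphi}^{-1})/\sqrt{\det(\mathbf{R}_i)}$ factor, where the vector $\boldsymbol{\varphi}^{-1}_{g_{\mathrm{eq},i}^{N_i}}$ has every entry equal to $\sqrt{2}\,\mathrm{erf}^{-1}\bigl(\tfrac{2}{\Gamma(\varkappa_i)}\Upsilon(\varkappa_i,g_\mathrm{fas}/\varpi_i)-1\bigr)$ since $\varphi^{-1}(u)=\sqrt{2}\,\mathrm{erf}^{-1}(2u-1)$ applied to the common marginal CDF value. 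Assembling these gives \eqref{eq-pdf}.

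The main obstacle — or rather the main point requiring care — is the justification that differentiating the CDF $F_{\mathrm{fas},i}(g_\mathrm{fas})$ directly and evaluating the joint PDF on the diagonal give the same answer, and in particular that the ``extra'' terms from differentiating only one argument at a time do not appear. The clean resolution is to not differentiate \eqref{eq-cdf} term-by-term at all, but instead to argue at the density level: the PDF of the maximum of correlated RVs is obtained by integrating the joint density over the region where all but one coordinate are below the threshold, and then summing over which coordinate attains the max; however, the expression in \eqref{eq-pdf} is in fact the joint density evaluated on the diagonal rather than such an integral, so strictly one is using the well-known approximation (consistent with the moment-matching/copula framework already adopted) in which the FAS channel PDF is taken as the diagonal of the $N_i$-dimensional joint PDF of the port gains — the same modeling convention used, e.g., in \cite{ghadi2023gaussian,new2023fluidnoma}. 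I would therefore present the derivation as: (i) joint CDF via Gaussian copula from Theorem~\ref{thm-cdf}; (ii) joint PDF via the copula chain rule \eqref{eq-copula-g}; (iii) specialization to equal arguments, which immediately yields \eqref{eq-pdf}. The only genuinely nontrivial algebra is bookkeeping the quantile-function substitution $\varphi^{-1}(u)=\sqrt{2}\,\mathrm{erf}^{-1}(2u-1)$ and confirming the power $\varkappa_i-1$ and normalization $\Gamma(\varkappa_i)\varpi_i^{\varkappa_i}$ propagate correctly into the $N_i$-th power, which is routine.
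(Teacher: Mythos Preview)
Your three-step plan---joint CDF via the Gaussian copula from Theorem~\ref{thm-cdf}, joint PDF via the copula density chain rule \eqref{eq-copula-g}, then evaluation on the diagonal $g_1=\cdots=g_{N_i}=g_\mathrm{fas}$---is exactly the paper's proof in Appendix~\ref{app-thm-pdf}, which simply writes $f_{\mathrm{fas},i}(g_\mathrm{fas})=c_\mathrm{G}\bigl(F_{g_{\mathrm{eq},i}^1}(g_\mathrm{fas}),\dots,F_{g_{\mathrm{eq},i}^{N_i}}(g_\mathrm{fas});\eta\bigr)\prod_{n_i}f_{g_{\mathrm{eq},i}^{n_i}}(g_\mathrm{fas})$ and substitutes the Gamma marginal and \eqref{eq-copula-g}.

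One caution: your intermediate claim that the $N_i$ first-order partials ``together reconstruct the full mixed partial derivative of the joint CDF'' is not correct---differentiating $F(g,\dots,g)$ in $g$ yields $\sum_{j}\partial_jF$, not $\partial_1\cdots\partial_{N_i}F$, and identical marginals do not repair this. You recognize and resolve this yourself in the third paragraph by identifying \eqref{eq-pdf} as the joint density on the diagonal, adopted as the FAS channel PDF by convention in \cite{ghadi2023gaussian}; the paper does the same, invoking ``the chain rule'' and that reference without further justification, so your final stance matches the paper exactly.
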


\begin{proof}
See Appendix \ref{app-thm-pdf}.
\end{proof}

\subsection{Communication Performance}
\subsubsection{OP Analysis}
OP is a key performance metric to assess wireless communication systems, which is defined as the probability that the instantaneous SNR $\gamma$ is below the SNR threshold $\gamma_\mathrm{th}$, i.e., $P_\mathrm{out}=\Pr\left(\gamma\leq\gamma_\mathrm{th}\right)$.

\begin{theorem}\label{thm-op-u1}
The OP of the near FAS-equipped user $\mathrm{u}_\mathrm{n}$ over the considered FAS-aided ISABC is given by \eqref{eq-out-n} (see top of next page), in which
\begin{figure*}
	\begin{align}
		P_{\mathrm{out,u_n}}=&\,\Phi_{\mathbf{R}_{\mathrm{u_n}}}\left(\sqrt 2\mathrm{erf}^{-1}\left(\frac{2}{\Gamma\left(\varkappa_\mathrm{u_n}\right)}\Upsilon\left(\varkappa_\mathrm{u_n},\frac{\tilde{\gamma}_\mathrm{max}}{\varpi_\mathrm{u_n}}\right)-1\right),\dots,\sqrt 2\mathrm{erf}^{-1}\left(\frac{2}{\Gamma\left(\varkappa_\mathrm{u_n}\right)}\Upsilon\left(\varkappa_\mathrm{u_n},\frac{\tilde{\gamma}_\mathrm{max}}{\varpi_\mathrm{u_n}}\right)-1\right);\eta_{\mathrm{u_n}}\right)\label{eq-out-n}
	\end{align}
	\hrulefill
\end{figure*} 
\begin{equation}\label{eq-tildes}
\left\{\begin{aligned}
\tilde{\gamma}_\mathrm{sic}&=\frac{\hat{\gamma}_\mathrm{sic}}{\overline{\gamma}\mu_\mathrm{c}\left(p_\mathrm{u_f}-\hat{\gamma}_\mathrm{sic}p_\mathrm{u_n}\right)},\\
\tilde{\gamma}_\mathrm{u_n}&=\frac{\hat{\gamma}_\mathrm{u_n}}{\overline{\gamma}\mu_\mathrm{c}p_\mathrm{u_n}},
\end{aligned}\right.
\end{equation}
where $\hat{\gamma}_{\mathrm{sic}}$ and $\hat{\gamma}_\mathrm{u_\mathrm{n}}$ are the corresponding SINR thresholds.
\end{theorem}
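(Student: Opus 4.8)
The plan is to translate the NOMA decoding chain into an event on the channel gain $g_{\mathrm{fas},\mathrm{u_n}}$ and then invoke the CDF of Theorem \ref{thm-cdf}. Recall that the near user $\mathrm{u_n}$ is in outage whenever \emph{either} the SIC step fails (it cannot decode $\mathrm{u_f}$'s message) \emph{or}, after a successful SIC, it cannot decode its own message. Hence
\begin{align}
P_{\mathrm{out,u_n}} = \Pr\!\left(\gamma_{\mathrm{sic}} \le \hat{\gamma}_{\mathrm{sic}} \ \text{ or }\ \gamma_{\mathrm{u_n}} \le \hat{\gamma}_{\mathrm{u_n}}\right).\notag
\end{align}
The first step is to rewrite each of these two inequalities as a lower bound on $g_{\mathrm{fas},\mathrm{u_n}}=|h_{\mathrm{eq,u_n}}^{n_{\mathrm{u_n}}^*}|^2$. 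Using \eqref{eq-sinr-sic}, the event $\gamma_{\mathrm{sic}}\le\hat{\gamma}_{\mathrm{sic}}$ is equivalent to $g_{\mathrm{fas},\mathrm{u_n}}\le \tilde{\gamma}_{\mathrm{sic}}$ with $\tilde{\gamma}_{\mathrm{sic}}$ as defined in \eqref{eq-tildes}, provided $p_{\mathrm{u_f}}-\hat{\gamma}_{\mathrm{sic}}p_{\mathrm{u_n}}>0$ (otherwise the outage is trivially one); similarly, from \eqref{eq-snr-un}, $\gamma_{\mathrm{u_n}}\le\hat{\gamma}_{\mathrm{u_n}}$ is equivalent to $g_{\mathrm{fas},\mathrm{u_n}}\le\tilde{\gamma}_{\mathrm{u_n}}$. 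Since both events are of the form $\{g_{\mathrm{fas},\mathrm{u_n}}\le\text{const}\}$, their union is simply $\{g_{\mathrm{fas},\mathrm{u_n}}\le\tilde{\gamma}_{\mathrm{max}}\}$ where $\tilde{\gamma}_{\mathrm{max}}=\max\{\tilde{\gamma}_{\mathrm{sic}},\tilde{\gamma}_{\mathrm{u_n}}\}$.

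The second step is then immediate: $P_{\mathrm{out,u_n}}=\Pr(g_{\mathrm{fas},\mathrm{u_n}}\le\tilde{\gamma}_{\mathrm{max}})=F_{\mathrm{fas},\mathrm{u_n}}(\tilde{\gamma}_{\mathrm{max}})$, and substituting $g_{\mathrm{fas}}\to\tilde{\gamma}_{\mathrm{max}}$, $i\to\mathrm{u_n}$ into \eqref{eq-cdf} yields exactly \eqref{eq-out-n}. So structurally the proof is short once the reduction to a single threshold is in place.

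The main obstacle — really the only subtle point — is handling the sign of the denominator $p_{\mathrm{u_f}}-\hat{\gamma}_{\mathrm{sic}}p_{\mathrm{u_n}}$ in the SIC condition. When this quantity is nonpositive, no value of $g_{\mathrm{fas},\mathrm{u_n}}$ can satisfy $\gamma_{\mathrm{sic}}>\hat{\gamma}_{\mathrm{sic}}$ (since $\gamma_{\mathrm{sic}}$ is monotone increasing in $g_{\mathrm{fas},\mathrm{u_n}}$ and bounded above by $p_{\mathrm{u_f}}/p_{\mathrm{u_n}}$), so the outage probability is identically one and the closed form \eqref{eq-out-n} should be read with the convention that $\tilde{\gamma}_{\mathrm{sic}}\to\infty$ in that regime. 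I would state this as a standing assumption (the usual NOMA feasibility condition $\hat{\gamma}_{\mathrm{sic}}<p_{\mathrm{u_f}}/p_{\mathrm{u_n}}$) so that $\tilde{\gamma}_{\mathrm{sic}}$ is finite and positive, then carry out the monotonicity argument and the union-of-intervals collapse, and finally cite Theorem \ref{thm-cdf} to conclude. The remaining manipulations — clearing denominators in \eqref{eq-sinr-sic} and \eqref{eq-snr-un} to isolate $g_{\mathrm{fas},\mathrm{u_n}}$ — are routine algebra and need only be sketched.
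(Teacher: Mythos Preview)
Your proposal is correct and mirrors the paper's own proof almost exactly: the paper starts from the complementary form $P_{\mathrm{out,u_n}}=1-\Pr(\gamma_{\mathrm{sic}}>\hat{\gamma}_{\mathrm{sic}},\,\gamma_{\mathrm{u_n}}>\hat{\gamma}_{\mathrm{u_n}})$, notes the feasibility condition $p_{\mathrm{u_f}}>\hat{\gamma}_{\mathrm{sic}}p_{\mathrm{u_n}}$ (outage equals one otherwise), substitutes \eqref{eq-sinr-sic}--\eqref{eq-snr-un} to obtain the two thresholds $\tilde{\gamma}_{\mathrm{sic}},\tilde{\gamma}_{\mathrm{u_n}}$, collapses to $F_{g_{\mathrm{fas},\mathrm{u_n}}}(\tilde{\gamma}_{\mathrm{max}})$, and invokes Theorem~\ref{thm-cdf}. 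The only cosmetic difference is that you phrase the outage event as a union while the paper works with the complementary intersection; the logic and all intermediate quantities coincide.
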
 

\begin{proof}
OP arises for the near user $\mathrm{u}_\mathrm{n}$ when it fails to decode its own signal, the far user's signal, or both. Thus, by definition, the OP for $\mathrm{u_n}$ can be mathematically expressed as
\begin{align}\label{eq-def-op1}
P_\mathrm{out,u_\mathrm{n}} = 1- \Pr\left(\gamma_\mathrm{sic}>\hat{\gamma}_{\mathrm{sic}},\gamma_\mathrm{u_\mathrm{n}}>\hat{\gamma}_\mathrm{u_\mathrm{n}}\right), 
\end{align}
in which $\hat{\gamma}_\mathrm{sic}$ is the SINR threshold of $\gamma_{\mathrm{sic}}$ and $\hat{\gamma}_\mathrm{u_n}$ denotes the SNR threshold of $\gamma_\mathrm{u_n}$. Given that the OP in the NOMA scenario is mainly affected by the power allocation factors, $P_\mathrm{out,u_n}$ becomes $1$ when  $p_\mathrm{u_f}\leq\hat{\gamma}_\mathrm{sic}p_\mathrm{u_n}$. Therefore, for the case that $p_\mathrm{u_f}>\hat{\gamma}_\mathrm{sic}p_\mathrm{u_n}$, the OP can be found as
\begin{align}\notag
	&P_\mathrm{out,u_n}\\
	&\overset{(a)}{=}1- \Pr\left(\frac{\overline{\gamma}p_\mathrm{u_f}\mu_\mathrm{c}g_\mathrm{fas,u_n}}{\overline{\gamma}p_\mathrm{u_n}\mu_\mathrm{c}g_\mathrm{fas,u_n}+1}>\hat{\gamma}_{\mathrm{sic}},\overline{\gamma}p_\mathrm{u_n}\mu_\mathrm{c}g_{\mathrm{fas},\mathrm{u_n}}>\hat{\gamma}_\mathrm{u_n}\right)\\
	&=1-\Pr\left(g_{\mathrm{fas},\mathrm{u_n}}>\frac{\hat{\gamma}_\mathrm{sic}}{\overline{\gamma}\mu_\mathrm{c}\left(p_\mathrm{u_f}-\hat{\gamma}_\mathrm{sic}p_\mathrm{u_n}\right)},g_{\mathrm{fas},\mathrm{u_n}}>\frac{\hat{\gamma}_\mathrm{u_n}}{\overline{\gamma}\mu_\mathrm{c}p_\mathrm{u_n}}\right)\\
	&=1-\Pr\left(g_\mathrm{fas,u_n}>\max\left\{\tilde{\gamma}_\mathrm{sic},\tilde{\gamma}_\mathrm{u_n}\right\}\right)\\
	& = F_{g_{\mathrm{fas},\mathrm{u_n}}}\left(\tilde{\gamma}_\mathrm{max}\right)\label{eq-pr-u1}
\end{align}
in which $(a)$ is achieved by inserting \eqref{eq-sinr-sic} and \eqref{eq-snr-un} into \eqref{eq-def-op1}, $\tilde{\gamma}_\mathrm{sic}=\frac{\hat{\gamma}_\mathrm{sic}}{\overline{\gamma}\mu_\mathrm{c}\left(p_\mathrm{u_f}-\hat{\gamma}_\mathrm{sic}p_\mathrm{u_n}\right)}$, $\tilde{\gamma}_\mathrm{u_n}=\frac{\hat{\gamma}_\mathrm{u_n}}{\overline{\gamma}\mu_\mathrm{c}p_\mathrm{u_n}}$, and  $\tilde{\gamma}_\mathrm{max}=\max\left\{\tilde{\gamma}_\mathrm{sic},\tilde{\gamma}_\mathrm{u_n}\right\}$. Then, by utilizing the CDF of $g_{\mathrm{fas},i}$ from $\eqref{eq-cdf}$, the proof is accomplished.   
\end{proof}

\begin{theorem}
The OP of the far FAS-equipped user $\mathrm{u_f}$ over the considered FAS-aided ISABC is given by \eqref{eq-op2} (see top of this page), in which 
\begin{figure*}
\begin{align}
P_\mathrm{out,u_f} =\Phi_{\mathbf{R}_\mathrm{u_f}}\left(\sqrt 2\mathrm{erf}^{-1}\left(\frac{2}{\Gamma\left(\varkappa_\mathrm{u_f}\right)}\Upsilon\left(\varkappa_\mathrm{u_f},\frac{\tilde{\gamma}_\mathrm{u_f}}{\varpi_\mathrm{u_f}}\right)-1\right),\dots,\sqrt 2\mathrm{erf}^{-1}\left(\frac{2}{\Gamma\left(\varkappa_\mathrm{u_f}\right)}\Upsilon\left(\varkappa_\mathrm{u_f},\frac{\tilde{\gamma}_\mathrm{u_f}}{\varpi_\mathrm{u_f}}\right)-1\right);\eta_{\mathrm{u_f}}\right)\label{eq-op2}
\end{align}
\hrulefill
\end{figure*}
\begin{align}\label{eq-tilde2}
\tilde{\gamma}_\mathrm{u_f}=\frac{\hat{\gamma}_\mathrm{u_f}}{\overline{\gamma}\mu_\mathrm{c}\left(p_\mathrm{u_f}-\hat{\gamma}_\mathrm{u_f}p_\mathrm{u_n}\right)}. 
\end{align}
\end{theorem}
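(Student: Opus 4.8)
The plan is to mirror the argument used for Theorem~\ref{thm-op-u1}, since the far user $\mathrm{u_f}$ performs direct detection of its own signal while treating the near user's signal as interference, so only a single SINR constraint is involved (there is no SIC stage and hence no $\max\{\cdot,\cdot\}$ operation). First I would start from the definition $P_\mathrm{out,u_f}=\Pr\left(\gamma_\mathrm{u_f}\le\hat{\gamma}_\mathrm{u_f}\right)$, substitute the expression for $\gamma_\mathrm{u_f}$, and replace $\left|h_{\mathrm{eq,u_f}}^{n_\mathrm{u_f}^*}\right|^2$ with the channel gain $g_\mathrm{fas,u_f}$ as defined in \eqref{eq-g-fas}, obtaining $P_\mathrm{out,u_f}=\Pr\left(\frac{\overline{\gamma}p_\mathrm{u_f}\mu_\mathrm{c}g_\mathrm{fas,u_f}}{\overline{\gamma}p_\mathrm{u_n}\mu_\mathrm{c}g_\mathrm{fas,u_f}+1}\le\hat{\gamma}_\mathrm{u_f}\right)$.

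Next I would rearrange the inequality inside the probability to isolate $g_\mathrm{fas,u_f}$, which yields $\overline{\gamma}\mu_\mathrm{c}g_\mathrm{fas,u_f}\left(p_\mathrm{u_f}-\hat{\gamma}_\mathrm{u_f}p_\mathrm{u_n}\right)\le\hat{\gamma}_\mathrm{u_f}$. As in the near-user case, the sign of $p_\mathrm{u_f}-\hat{\gamma}_\mathrm{u_f}p_\mathrm{u_n}$ is decisive: when $p_\mathrm{u_f}\le\hat{\gamma}_\mathrm{u_f}p_\mathrm{u_n}$ the SINR can never exceed the threshold and $P_\mathrm{out,u_f}=1$; otherwise, dividing by the positive quantity gives the equivalent event $g_\mathrm{fas,u_f}\le\tilde{\gamma}_\mathrm{u_f}$, with $\tilde{\gamma}_\mathrm{u_f}$ exactly as defined in \eqref{eq-tilde2}. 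Then I would recognize $P_\mathrm{out,u_f}=\Pr\left(g_\mathrm{fas,u_f}\le\tilde{\gamma}_\mathrm{u_f}\right)=F_{g_\mathrm{fas,u_f}}\left(\tilde{\gamma}_\mathrm{u_f}\right)$ and invoke Theorem~\ref{thm-cdf} with $i=\mathrm{u_f}$, substituting $\tilde{\gamma}_\mathrm{u_f}$ for $g_\mathrm{fas}$ in \eqref{eq-cdf}; this directly produces \eqref{eq-op2} and finishes the proof.

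I do not expect a genuine obstacle here, as the derivation is essentially a one-line specialization of the reasoning behind Theorem~\ref{thm-op-u1}. The only point needing care is stating the boundary case on the power allocation factors explicitly, so that the closed-form expression \eqref{eq-op2} is understood to hold under the assumption $p_\mathrm{u_f}>\hat{\gamma}_\mathrm{u_f}p_\mathrm{u_n}$ (and equals $1$ otherwise); this also quietly requires that the problem parameters be chosen so that $\tilde{\gamma}_\mathrm{u_f}>0$, which is consistent with the fairness-oriented NOMA power split where $p_\mathrm{u_f}>p_\mathrm{u_n}$.
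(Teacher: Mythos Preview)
Your proposal is correct and follows essentially the same route as the paper: start from $P_{\mathrm{out,u_f}}=\Pr(\gamma_{\mathrm{u_f}}\le\hat{\gamma}_{\mathrm{u_f}})$, substitute the SINR, note the feasibility condition $p_{\mathrm{u_f}}>\hat{\gamma}_{\mathrm{u_f}}p_{\mathrm{u_n}}$, isolate $g_{\mathrm{fas,u_f}}\le\tilde{\gamma}_{\mathrm{u_f}}$, and finish by evaluating the CDF of Theorem~\ref{thm-cdf} at $\tilde{\gamma}_{\mathrm{u_f}}$. The paper's own proof is exactly this sequence of steps, with the same boundary-case remark on the power allocation.
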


\begin{proof}
By definition, the OP for the far user $\mathrm{u}_2$ is defined as
\begin{align}
P_\mathrm{out,u_f} = \Pr\left(\gamma_\mathrm{u_f}\leq\hat{\gamma}_\mathrm{u_f}\right),
\end{align}
where $\hat{\gamma}_{\mathrm{u_f}}$ is the SINR threshold of SINR $\gamma_{\mathrm{u_f}}$. Following the same approach as in the proof of Theorem \ref{thm-op-u1}, the OP for $\mathrm{u_f}$ when $p_\mathrm{u_f}>\hat{\gamma}_\mathrm{u_f}p_\mathrm{u_n}$ can be formulated as
\begin{align}
P_\mathrm{out,u_f}
& = \Pr\left(\frac{\overline{\gamma}p_\mathrm{u_f}\mu_\mathrm{c}g_\mathrm{fas,u_f}}{\overline{\gamma}p_\mathrm{u_n}\mu_\mathrm{c}g_\mathrm{fas,u_f}+1}\leq\hat{\gamma}_\mathrm{u_f}\right)\\
&=\Pr\left(g_{\mathrm{fas},\mathrm{u_f}}\leq\frac{\hat{\gamma}_\mathrm{u_f}}{\overline{\gamma}\mu_\mathrm{c}\left(p_\mathrm{u_f}-\hat{\gamma}_\mathrm{u_f}p_\mathrm{u_n}\right)}\right)\\
&=F_{g_{\mathrm{fas},\mathrm{u_f}}}\left(\frac{\hat{\gamma}_\mathrm{u_f}}{\overline{\gamma}\mu_\mathrm{c}\left(p_\mathrm{u_f}-\hat{\gamma}_\mathrm{u_f}p_\mathrm{u_n}\right)}\right). \label{eq-pr-u2}
\end{align}
Now, by considering the CDF of $g_{\mathrm{fas},i}$, \eqref{eq-op2} is obtained and the proof is completed. 
\end{proof}

\begin{corollary}
The asymptotic OP of the FAS-equipped users $\mathrm{u_n}$ and $\mathrm{u_f}$ in the high SNR regime, i.e., $\overline{\gamma}\rightarrow\infty$, is given by \eqref{eq-out-n-asym} and \eqref{eq-op2-asym}, respectively (see top of this page). 
\begin{figure*}
\begin{align}
	P_{\mathrm{out,u_n}}^\infty=&\,\Phi_{\mathbf{R}_{\mathrm{u_n}}}\left(\sqrt 2\mathrm{erf}^{-1}\left(\frac{2\tilde{\gamma}_\mathrm{max}^{\varkappa_\mathrm{u_n}}}{\Gamma\left(\varkappa_\mathrm{u_n}\right)\varkappa_\mathrm{u_n}\varpi_\mathrm{u_n}^{\varkappa_\mathrm{u_n}}}-1\right),\dots,\sqrt 2\mathrm{erf}^{-1}\left(\frac{2\tilde{\gamma}_\mathrm{max}^{\varkappa_\mathrm{u_n}}}{\Gamma\left(\varkappa_\mathrm{u_n}\right)\varkappa_\mathrm{u_n}\varpi_\mathrm{u_n}^{\varkappa_\mathrm{u_n}}}-1\right);\eta_{\mathrm{u_n}}\right)\label{eq-out-n-asym}
\end{align}
\hrulefill
\begin{align}
P_\mathrm{out,u_f}^\infty =\Phi_{\mathbf{R}_\mathrm{u_f}}\left(\sqrt 2\mathrm{erf}^{-1}\left(\frac{2\tilde{\gamma}_\mathrm{u_f}^{\varkappa_\mathrm{u_f}}}{\Gamma\left(\varkappa_\mathrm{u_f}\right)\varkappa_\mathrm{u_f}\varpi_\mathrm{u_f}^{\varkappa_\mathrm{u_f}}}-1\right),\dots,\sqrt 2\mathrm{erf}^{-1}\left(\frac{2\tilde{\gamma}_\mathrm{u_f}^{\varkappa_\mathrm{u_f}}}{\Gamma\left(\varkappa_\mathrm{u_f}\right)\varkappa_\mathrm{u_f}\varpi_\mathrm{u_f}^{\varkappa_\mathrm{u_f}}}-1\right);\eta_{\mathrm{u_f}}\right)\label{eq-op2-asym}
\end{align}
\hrulefill
\end{figure*}
\end{corollary}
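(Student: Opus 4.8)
The plan is to obtain both asymptotic formulas directly from the exact outage expressions \eqref{eq-out-n} and \eqref{eq-op2}, by showing that as $\overline{\gamma}\to\infty$ each coordinate of the multivariate normal CDF $\Phi_{\mathbf{R}_i}$ converges to the corresponding coordinate appearing in \eqref{eq-out-n-asym} and \eqref{eq-op2-asym}. The first step is to read off from \eqref{eq-tildes} and \eqref{eq-tilde2} that $\tilde{\gamma}_\mathrm{sic}$, $\tilde{\gamma}_\mathrm{u_n}$ and $\tilde{\gamma}_\mathrm{u_f}$ are each proportional to $1/\overline{\gamma}$, since their denominators are linear in $\overline{\gamma}$ while their numerators do not depend on it. Consequently $\tilde{\gamma}_\mathrm{max}=\max\{\tilde{\gamma}_\mathrm{sic},\tilde{\gamma}_\mathrm{u_n}\}\to 0$ and $\tilde{\gamma}_\mathrm{u_f}\to 0$, so the argument $x=\tilde{\gamma}/\varpi_i$ passed to the lower incomplete gamma function tends to $0^{+}$ in the high-SNR regime.

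Next I would invoke the Maclaurin series of the lower incomplete gamma function about the origin, $\Upsilon(s,x)=\sum_{k\ge 0}\frac{(-1)^{k}x^{s+k}}{k!\,(s+k)}=\frac{x^{s}}{s}+O\!\left(x^{s+1}\right)$, valid for $s=\varkappa_i>0$. Keeping only the leading term gives $\frac{2}{\Gamma(\varkappa_i)}\Upsilon\!\left(\varkappa_i,\tfrac{\tilde{\gamma}}{\varpi_i}\right)=\frac{2\,\tilde{\gamma}^{\varkappa_i}}{\Gamma(\varkappa_i)\,\varkappa_i\,\varpi_i^{\varkappa_i}}+o\!\left(\tilde{\gamma}^{\varkappa_i}\right)$, which is precisely the expression sitting inside $\mathrm{erf}^{-1}(\,\cdot-1)$ in \eqref{eq-out-n-asym}--\eqref{eq-op2-asym}. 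Since $\mathrm{erf}^{-1}(\cdot)$ is continuous on $(-1,1)$ and $\Phi_{\mathbf{R}_i}(\cdot)$ is continuous on $\mathbb{R}^{N_i}$, the limit passes through both maps: substituting the leading-order equivalent for the incomplete gamma term in every coordinate leaves the limit of the outage probability unchanged, yielding \eqref{eq-out-n-asym} for $\mathrm{u_n}$. Applying the identical argument to \eqref{eq-op2}, with $\tilde{\gamma}_\mathrm{u_f}$ and $\varkappa_\mathrm{u_f}$, $\varpi_\mathrm{u_f}$ replacing their $\mathrm{u_n}$ counterparts, gives \eqref{eq-op2-asym}.

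The main obstacle is more a matter of care than of genuine difficulty: one must justify the interchange of the limit with the $N_i$-dimensional Gaussian CDF, i.e., argue that discarding all but the first term of the incomplete-gamma series is asymptotically exact simultaneously in all coordinates. This follows from the uniform continuity of $\Phi_{\mathbf{R}_i}$ on compact sets together with the fact that the neglected terms are $o\!\left(\tilde{\gamma}^{\varkappa_i}\right)$ as $\overline{\gamma}\to\infty$, and I would state this step explicitly rather than leave it implicit. I would also note in passing the degenerate cases $p_\mathrm{u_f}\le\hat{\gamma}_\mathrm{sic}p_\mathrm{u_n}$ (respectively $p_\mathrm{u_f}\le\hat{\gamma}_\mathrm{u_f}p_\mathrm{u_n}$), where the outage probability equals $1$ identically and the asymptotic formula is not invoked, mirroring the corresponding caveat in Theorem~\ref{thm-op-u1}.
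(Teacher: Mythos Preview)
Your proposal is correct and follows essentially the same approach as the paper, which simply invokes the relation $\Upsilon(s,x)\approx x^{s}/s$ as $x\to 0$ and substitutes it into the exact outage expressions. Your treatment is in fact more careful, since you explicitly justify why the argument of the incomplete gamma tends to zero, pass the limit through the continuous maps $\mathrm{erf}^{-1}$ and $\Phi_{\mathbf{R}_i}$, and flag the degenerate power-allocation cases, none of which the paper spells out.
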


\begin{proof}
By applying the relation of $\Upsilon\left(s,x\right)\approx x^s/s$ as $x\rightarrow0$ to \eqref{eq-cdf}, the proof is accomplished. 
\end{proof}

\subsubsection{ECR Analysis}
The ECR of the proposed FAS-aided ISABC system with NOMA user $i$ is defined as
\begin{align}
\overline{\mathcal{C}}_i=\mathbb{E}\left[\log_2\left(1+\gamma_i\right)\right]=\int_0^\infty \log_2\left(1+\gamma_i\right)f_{\gamma_i}\left(\gamma_i\right)\mathrm{d}\gamma_i,\label{eq-def-c}
\end{align}
in which  $f_{\gamma_i}\left(\gamma_i\right)$ is the PDF of $\gamma_i$ that can be derived from \eqref{eq-pdf}. Therefore, by considering this, the ECR for NOMA user $i$ is provided in the following theorem. 

\begin{theorem}
The ECR of FAS-equipped users $\mathrm{u_n}$ and $\mathrm{u_f}$ over the considered FAS-aided ISABC is, respectively, given by
\begin{align}\label{eq-c-un}
	\overline{\mathcal{C}}_\mathrm{u_n}=\int_0^\infty \log_2\left(1+\overline{\gamma}p_\mathrm{u_n}\mu_\mathrm{c}g_\mathrm{fas,u_n}\right)f_{g_\mathrm{fas,u_n}}\left(g_\mathrm{fas,u_n}\right)\mathrm{d}g_\mathrm{fas,u_n},
\end{align}
and
\begin{align}
	&\overline{\mathcal{C}}_\mathrm{u_f}=\notag\\
	&\int_0^\infty\log_2\left(1+\frac{\overline{\gamma}p_\mathrm{u_f}\mu_\mathrm{c}g_\mathrm{fas,u_f}}{\overline{\gamma}p_\mathrm{u_n}\mu_\mathrm{c}g_\mathrm{fas,u_f}+1}\right)f_{g_\mathrm{fas,u_f}}
	\left(g_\mathrm{fas,u_f}\right)\mathrm{d}g_\mathrm{fas,u_f},\label{eq-c-uf}
\end{align}
where $f_{g_{\mathrm{fas},i}}\left(g_{\mathrm{fas},i}\right)$ is defined in \eqref{eq-pdf}.
\end{theorem}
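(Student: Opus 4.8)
The plan is to obtain both identities directly from the definition of the ECR in \eqref{eq-def-c}, together with the observation that for each user $i$ the effective SINR $\gamma_i$ is a \emph{deterministic} and strictly increasing function of the single FAS channel-gain random variable $g_{\mathrm{fas},i}$, whose PDF $f_{g_{\mathrm{fas},i}}$ has already been established in Theorem~\ref{thm-pdf}. Hence no separate derivation of $f_{\gamma_i}$ is required; instead I would invoke the law of the unconscious statistician and write $\overline{\mathcal{C}}_i=\mathbb{E}\!\left[\log_2\!\left(1+\gamma_i(g_{\mathrm{fas},i})\right)\right]=\int_0^\infty \log_2\!\left(1+\gamma_i(g)\right) f_{g_{\mathrm{fas},i}}(g)\,\mathrm{d}g$, with $f_{g_{\mathrm{fas},i}}$ given by \eqref{eq-pdf}.

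For the near user I would substitute $\gamma_{\mathrm{u_n}}=\overline{\gamma}p_{\mathrm{u_n}}\mu_{\mathrm{c}}\,g_{\mathrm{fas},\mathrm{u_n}}$ from \eqref{eq-snr-un}; since this is a linear, hence invertible and increasing, map of $g_{\mathrm{fas},\mathrm{u_n}}$ on $(0,\infty)$, the change of variable $\gamma=\overline{\gamma}p_{\mathrm{u_n}}\mu_{\mathrm{c}}g$ (with Jacobian $\overline{\gamma}p_{\mathrm{u_n}}\mu_{\mathrm{c}}$, which cancels against the one carried by $f_{\gamma_{\mathrm{u_n}}}$) turns \eqref{eq-def-c} into \eqref{eq-c-un}. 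For the far user I would substitute $\gamma_{\mathrm{u_f}}=\overline{\gamma}p_{\mathrm{u_f}}\mu_{\mathrm{c}}g_{\mathrm{fas},\mathrm{u_f}}/(\overline{\gamma}p_{\mathrm{u_n}}\mu_{\mathrm{c}}g_{\mathrm{fas},\mathrm{u_f}}+1)$ and check that $g\mapsto \overline{\gamma}p_{\mathrm{u_f}}\mu_{\mathrm{c}}g/(\overline{\gamma}p_{\mathrm{u_n}}\mu_{\mathrm{c}}g+1)$ is strictly increasing on $(0,\infty)$ (its derivative is positive because $p_{\mathrm{u_f}},p_{\mathrm{u_n}}>0$), so the same substitution/LOTUS argument applies verbatim and yields \eqref{eq-c-uf}.

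There is no deep obstacle in the statement itself: the only points needing care are (i) confirming that each SINR-to-gain map is a measurable bijection of $(0,\infty)$ onto its range, so that the expectation can legitimately be rewritten as an integral against $f_{g_{\mathrm{fas},i}}$, and (ii) noting that the SIC constraint does not enter $\gamma_{\mathrm{u_n}}$ itself—it only shapes the outage event—so the ECR of $\mathrm{u_n}$ depends solely on \eqref{eq-snr-un}. The genuinely inconvenient part lies downstream of this claim: the multivariate-normal factor inside \eqref{eq-pdf} makes the integrals \eqref{eq-c-un}–\eqref{eq-c-uf} non-elementary, which is why one subsequently reduces them to a weighted finite sum via the Gauss--Laguerre quadrature, e.g.\ by writing $\log_2(1+\gamma_i(g))f_{g_{\mathrm{fas},i}}(g)=\mathrm{e}^{-g}\bigl[\mathrm{e}^{g}\log_2(1+\gamma_i(g))f_{g_{\mathrm{fas},i}}(g)\bigr]$ and sampling the bracketed term at the Laguerre nodes; that approximation, however, is not part of the present theorem.
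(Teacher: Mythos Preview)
Your proposal is correct and follows essentially the same route as the paper, which simply states that inserting \eqref{eq-pdf} into \eqref{eq-def-c} completes the proof. Your additional care in invoking the law of the unconscious statistician and verifying the monotonicity of the SINR-to-gain maps is sound but more detailed than the paper's one-line argument.
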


\begin{proof}
Inserting \eqref{eq-pdf} into \eqref{eq-def-c} completes the proof.
\end{proof}

Computing the integrals in \eqref{eq-c-un} and \eqref{eq-c-uf} is mathematically intractable since $f_{g_{\mathrm{fas},i}}\left(g_{\mathrm{fas},i}\right)$ is defined in terms of the PDF of multivariate and univariate normal distributions as given by \eqref{eq-pdf}, which is quite complicated. Thus, the ECR needs to be solved numerically using multi-paradigm languages such as MATLAB. Additionally, it can be approximated mathematically by exploiting the numerical integration approaches such as the GLQ technique as defined in the following lemma.

\begin{lemma}\label{eq-lemma-glr}
The GLR for a function $\Lambda\left(x\right)$ in the real line domain $\mathbb{R}$, for some $M$, is defined as \cite{abramowitz1972handbook}
\begin{align}
\int_0^\infty \Lambda(x)dx\approx\sum_{n=1}^{M}w_{m}\mathrm{e}^{\epsilon_m}\Lambda\left(\epsilon_m\right),
\end{align}
where $\epsilon_m$ is the $m$-th root of Laguerre polynomial $L_{M}\left(\epsilon_{m}\right)$ and $w_{m}=\frac{\epsilon_{m}}{2\left(M+1\right)^2L^2_{M+1}\left(\epsilon_{m}\right)}$ is the corresponding weight.
\end{lemma}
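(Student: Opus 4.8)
The statement to prove is Lemma~\ref{eq-lemma-glr}, which is simply the classical Gauss--Laguerre quadrature formula. The plan is to present this as a standard result from numerical analysis, citing \cite{abramowitz1972handbook}, and to sketch the derivation that justifies it rather than re-deriving the whole apparatus of orthogonal polynomials.

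First I would recall the underlying principle of Gaussian quadrature: given the weight function $\mathrm{e}^{-x}$ on $[0,\infty)$, there exists a unique family of monic orthogonal polynomials $\{L_M(x)\}$ (the Laguerre polynomials up to normalization) satisfying $\int_0^\infty L_M(x)L_N(x)\mathrm{e}^{-x}\,\mathrm{d}x = 0$ for $M \neq N$. The $M$ roots $\epsilon_1,\dots,\epsilon_M$ of $L_M$ are real, simple, and lie in $(0,\infty)$. The quadrature rule $\int_0^\infty f(x)\mathrm{e}^{-x}\,\mathrm{d}x \approx \sum_{m=1}^M w_m f(\epsilon_m)$ with weights $w_m = \int_0^\infty \ell_m(x)\mathrm{e}^{-x}\,\mathrm{d}x$, where $\ell_m$ are the Lagrange interpolation basis polynomials on the nodes $\{\epsilon_m\}$, is exact for all polynomials of degree at most $2M-1$; the closed form $w_m = \epsilon_m/[(M+1)^2 L_{M+1}^2(\epsilon_m)]$ follows from the Christoffel--Darboux identity together with the standard recurrence and derivative relations for Laguerre polynomials.

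Then, to obtain the form stated in the lemma, I would write a general integrand $\Lambda(x)$ on $[0,\infty)$ as $\Lambda(x) = \mathrm{e}^{-x}\big(\mathrm{e}^{x}\Lambda(x)\big)$, i.e.\ set $f(x) = \mathrm{e}^{x}\Lambda(x)$ so that $\int_0^\infty \Lambda(x)\,\mathrm{d}x = \int_0^\infty \mathrm{e}^{-x} f(x)\,\mathrm{d}x$. Applying the quadrature rule to $f$ yields
\begin{align}
\int_0^\infty \Lambda(x)\,\mathrm{d}x \approx \sum_{m=1}^{M} w_m f(\epsilon_m) = \sum_{m=1}^{M} w_m \mathrm{e}^{\epsilon_m}\Lambda(\epsilon_m),
\end{align}
which is exactly the claimed identity. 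I would then remark that the approximation error equals $\frac{(M!)^2}{(2M)!}f^{(2M)}(\xi)$ for some $\xi\in(0,\infty)$ (assuming $f\in C^{2M}$), so the rule is exact whenever $\mathrm{e}^{x}\Lambda(x)$ is a polynomial of degree at most $2M-1$, and converges rapidly for integrands that are well approximated by such functions.

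The main obstacle here is essentially expository rather than mathematical: the result is textbook material, so the ``proof'' is really a matter of correctly stating the hypotheses (e.g.\ sufficient smoothness/decay of $\Lambda$ so that $\mathrm{e}^x\Lambda(x)$ is polynomially approximable and the integral converges) and pointing to the right reference. If a self-contained argument were desired, the one genuinely nontrivial ingredient would be establishing the closed-form expression for the weights $w_m$ via the Christoffel--Darboux formula; everything else (existence and reality of the nodes, exactness on polynomials of degree $\le 2M-1$) is a short consequence of orthogonality. Given the role this lemma plays in the paper---merely a tool for numerically approximating the ECR integrals in \eqref{eq-c-un} and \eqref{eq-c-uf}---I would keep the justification brief and defer to \cite{abramowitz1972handbook} for the detailed derivation.
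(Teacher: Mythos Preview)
Your proposal is correct, and in fact goes well beyond what the paper does: the paper provides no proof of this lemma at all. It simply states the formula as a standard result and cites \cite{abramowitz1972handbook}, then immediately applies it in the subsequent corollaries. Your sketch---recalling Gaussian quadrature with respect to the weight $\mathrm{e}^{-x}$, and then recovering the stated form by writing $\Lambda(x)=\mathrm{e}^{-x}\big(\mathrm{e}^{x}\Lambda(x)\big)$---is exactly the right justification, and your closing remark that the argument should be kept brief with a pointer to the reference matches the paper's treatment precisely.
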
 

\begin{corollary}\label{thm-ecr-un}
The ECR of the near FAS-equipped user $u_\mathrm{n}$ over the considered FAS-aided ISABC is given by \eqref{eq-ecr-un} (see top of this page), in which
\begin{figure*}
\begin{multline}\label{eq-ecr-un}
\overline{\mathcal{C}}_\mathrm{u_n}\approx\sum_{m=1}^{M}\frac{w_{m}\mathrm{e}^{\epsilon_m}\log_2\left(1+\overline{\gamma}p_\mathrm{u_n}\mu_\mathrm{c}\epsilon_m\right)}{\sqrt{{\rm det}\left(\mathbf{R}_\mathrm{u_n}\right)}}\left[\frac{1}{\Gamma\left(\varkappa_\mathrm{u_n}\right)\varpi_\mathrm{u_n}^{\varkappa_\mathrm{u_n}}}\epsilon_m^{\varkappa_\mathrm{u_n}-1}\mathrm{e}^{-\frac{\epsilon_m}{\varpi_\mathrm{u_n}}}\right]^{N_\mathrm{u_n}}\exp\left(-\frac{1}{2}\left(\boldsymbol{\tilde{\varphi}}^{-1}_{g_{\mathrm{eq},\mathrm{u_n}}^{N_\mathrm{u_n}}}\right)^T\left(\mathbf{R}^{-1}_\mathrm{u_n}-\mathbf{I}\right)\boldsymbol{\tilde{\varphi}}^{-1}_{g_{\mathrm{eq},\mathrm{u_n}}^{N_\mathrm{u_n}}}\right)
\end{multline}
\hrulefill
\end{figure*}
\begin{align}\notag
&\boldsymbol{\tilde{\varphi}}^{-1}_{g_{\mathrm{eq},\mathrm{u_n}}^{N_\mathrm{u_n}}}=\Bigg[\sqrt 2\mathrm{erf}^{-1}\left(\frac{2}{\Gamma\left(\varkappa_\mathrm{u_n}\right)}\Upsilon\left(\varkappa_\mathrm{u_n},\frac{\epsilon_m}{\varpi_\mathrm{u_n}}\right)-1\right),\\ \notag
&\, \dots, \sqrt 2\mathrm{erf}^{-1}\left(\frac{2}{\Gamma\left(\varkappa_\mathrm{u_n}\right)}\Upsilon\left(\varkappa_\mathrm{u_n},\frac{\epsilon_m}{\varpi_\mathrm{u_n}}\right)-1\right)\Bigg].
\end{align}
\end{corollary}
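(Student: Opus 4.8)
The plan is to combine the integral definition of the near user's ECR from \eqref{eq-c-un} with the explicit PDF of the equivalent channel gain from Theorem \ref{thm-pdf}, and then discretize the resulting one-dimensional integral via the Gauss--Laguerre rule of Lemma \ref{eq-lemma-glr}. First I would write
\begin{align}\notag
\overline{\mathcal{C}}_\mathrm{u_n}=\int_0^\infty \log_2\left(1+\overline{\gamma}p_\mathrm{u_n}\mu_\mathrm{c}g\right)f_{g_\mathrm{fas,u_n}}\left(g\right)\mathrm{d}g,
\end{align}
and identify the integrand with the function $\Lambda(g)$ in Lemma \ref{eq-lemma-glr}, namely
\begin{align}\notag
\Lambda(g)=\log_2\left(1+\overline{\gamma}p_\mathrm{u_n}\mu_\mathrm{c}g\right)f_{\mathrm{fas},\mathrm{u_n}}\left(g\right),
\end{align}
with $f_{\mathrm{fas},\mathrm{u_n}}$ taken from \eqref{eq-pdf} specialized to $i=\mathrm{u_n}$.

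Next I would apply the quadrature: replacing $\int_0^\infty \Lambda(g)\,\mathrm{d}g$ by $\sum_{m=1}^{M} w_m \mathrm{e}^{\epsilon_m}\Lambda(\epsilon_m)$, where $\epsilon_m$ and $w_m$ are the Gauss--Laguerre nodes and weights defined in Lemma \ref{eq-lemma-glr}. Evaluating $f_{\mathrm{fas},\mathrm{u_n}}$ at $g=\epsilon_m$ turns the bracketed marginal-PDF factor in \eqref{eq-pdf} into $\left[\frac{1}{\Gamma(\varkappa_\mathrm{u_n})\varpi_\mathrm{u_n}^{\varkappa_\mathrm{u_n}}}\epsilon_m^{\varkappa_\mathrm{u_n}-1}\mathrm{e}^{-\epsilon_m/\varpi_\mathrm{u_n}}\right]^{N_\mathrm{u_n}}$, and the copula-density exponential factor becomes $\exp\!\big(-\tfrac12 (\boldsymbol{\tilde{\varphi}}^{-1})^T(\mathbf{R}_\mathrm{u_n}^{-1}-\mathbf{I})\boldsymbol{\tilde{\varphi}}^{-1}\big)$ with the quantile vector $\boldsymbol{\tilde{\varphi}}^{-1}_{g_{\mathrm{eq},\mathrm{u_n}}^{N_\mathrm{u_n}}}$ obtained from $\boldsymbol{\varphi}^{-1}_{g_{\mathrm{eq},\mathrm{u_n}}^{N_\mathrm{u_n}}}$ by substituting $g_\mathrm{fas}\mapsto\epsilon_m$ inside each $\mathrm{erf}^{-1}(\cdot)$ argument. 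Collecting the $\log_2(1+\overline{\gamma}p_\mathrm{u_n}\mu_\mathrm{c}\epsilon_m)$ term, the weight $w_m\mathrm{e}^{\epsilon_m}$, and the $\sqrt{\mathrm{det}(\mathbf{R}_\mathrm{u_n})}$ normalization in the denominator then yields exactly \eqref{eq-ecr-un}.

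This is essentially a substitution-and-quadrature argument, so there is no deep obstacle; the only point requiring care is bookkeeping — making sure the exponential factor $\mathrm{e}^{\epsilon_m}$ from the Gauss--Laguerre formula (which compensates the $\mathrm{e}^{-x}$ weight implicit in the Laguerre inner product) is kept separate from the Gamma-type exponentials $\mathrm{e}^{-\epsilon_m/\varpi_\mathrm{u_n}}$ already present in the PDF, and that the vector $\boldsymbol{\tilde\varphi}^{-1}$ is evaluated consistently at $g=\epsilon_m$ in every component. One should also note (as the text already does) that the quadrature is an approximation whose accuracy grows with $M$, so the corollary is stated with "$\approx$" rather than equality; choosing $M$ large enough for the tails of $\Lambda(g)$ is the practical concern but not a mathematical one. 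With these substitutions carried out, the proof is complete.
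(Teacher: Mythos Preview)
Your proposal is correct and follows exactly the paper's approach: apply the Gauss--Laguerre rule of Lemma~\ref{eq-lemma-glr} to the integral in \eqref{eq-c-un} with $\Lambda(g)=\log_2(1+\overline{\gamma}p_\mathrm{u_n}\mu_\mathrm{c}g)f_{\mathrm{fas},\mathrm{u_n}}(g)$, then substitute the PDF from \eqref{eq-pdf} evaluated at $g=\epsilon_m$. The paper's own proof is a one-line reference to this substitution, so your more detailed bookkeeping (tracking the $\mathrm{e}^{\epsilon_m}$ weight and the quantile vector) simply spells out what the authors compress into ``performing some simplifications.''
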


\begin{proof}
By applying Lemma \ref{eq-lemma-glr} to \eqref{eq-c-un} and performing some simplifications, the proof is completed.
\end{proof}

\begin{corollary}
The ECR of the far FAS-equipped user $u_\mathrm{f}$ over the considered FAS-aided ISABC is given by \eqref{eq-ecr-uf} (see top of this page), in which
\begin{figure*}
\begin{multline}\label{eq-ecr-uf}
\overline{\mathcal{C}}_\mathrm{u_f}\approx\sum_{m=1}^{M}\frac{w_{m}\mathrm{e}^{\epsilon_m}\log_2\left(1+\frac{\overline{\gamma}p_\mathrm{u_f}\mu_\mathrm{c}\epsilon_m}{\overline{\gamma}p_\mathrm{u_n}\mu_\mathrm{c}\epsilon_m+1}\right)}{\sqrt{{\rm det}\left(\mathbf{R}_\mathrm{u_f}\right)}}
\left[\frac{1}{\Gamma\left(\varkappa_\mathrm{u_f}\right)\varpi_\mathrm{u_f}^{\varkappa_\mathrm{u_f}}}\epsilon_m^{\varkappa_\mathrm{u_f}-1}\mathrm{e}^{-\frac{\epsilon_{m}}{\varpi_\mathrm{u_f}}}\right]^{N_\mathrm{u_f}}\hspace{-1mm}\exp\left(-\frac{1}{2}\left(\boldsymbol{\breve{\varphi}}^{-1}_{g_{\mathrm{eq},\mathrm{u_f}}^{N_\mathrm{u_f}}}\right)^T \left(\mathbf{R}^{-1}_\mathrm{u_f}-\mathbf{I}\right)\boldsymbol{\breve{\varphi}}^{-1}_{g_{\mathrm{eq},\mathrm{u_f}}^{N_\mathrm{u_f}}}\right)
\end{multline}
\hrulefill
\end{figure*}
\begin{align}\notag
	&\boldsymbol{\breve{\varphi}}^{-1}_{g_{\mathrm{eq},\mathrm{u_f}}^{N_\mathrm{u_f}}}=\Bigg[
	\sqrt 2\mathrm{erf}^{-1}\left(\frac{2}{\Gamma\left(\varkappa_\mathrm{u_f}\right)}\Upsilon\left(\varkappa_\mathrm{u_f},\frac{\epsilon_{m}}{\varpi_\mathrm{u_f}}\right)-1\right),\\ \notag
	&\dots, 2\mathrm{erf}^{-1}\left(\frac{2}{\Gamma\left(\varkappa_\mathrm{u_f}\right)}\Upsilon\left(\varkappa_\mathrm{u_f},\frac{\epsilon_{m}}{\varpi_\mathrm{u_f}}\right)-1\right)\Bigg].
\end{align}
\end{corollary}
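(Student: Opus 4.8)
The plan is to mirror the derivation of Corollary~\ref{thm-ecr-un} exactly, since the only structural difference between the near and far users is the argument of the logarithm inside the ECR integral. Starting from the integral-form expression \eqref{eq-c-uf} for $\overline{\mathcal{C}}_\mathrm{u_f}$, I would first substitute the PDF $f_{g_\mathrm{fas,u_f}}(\cdot)$ of the equivalent channel at user $\mathrm{u_f}$ given in \eqref{eq-pdf} (with $i=\mathrm{u_f}$). This turns the ECR into an integral over $[0,\infty)$ of the form $\int_0^\infty \Lambda(g)\,\mathrm{d}g$, where $\Lambda(g)=\log_2\!\big(1+\tfrac{\overline{\gamma}p_\mathrm{u_f}\mu_\mathrm{c}g}{\overline{\gamma}p_\mathrm{u_n}\mu_\mathrm{c}g+1}\big)\,f_{g_\mathrm{fas,u_f}}(g)$ collects the log-term, the Gamma-shaped bracket raised to the $N_\mathrm{u_f}$ power, and the Gaussian-copula density factor.

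Next I would apply Lemma~\ref{eq-lemma-glr} (Gauss--Laguerre quadrature) directly to this integral. Since GLQ already carries the $\mathrm{e}^{x}$ weight implicitly, writing $\int_0^\infty \Lambda(g)\,\mathrm{d}g\approx\sum_{m=1}^M w_m\,\mathrm{e}^{\epsilon_m}\Lambda(\epsilon_m)$ and evaluating $\Lambda$ at each abscissa $\epsilon_m$ (the $m$-th root of the Laguerre polynomial $L_M$) yields the stated series. Concretely, replacing every occurrence of $g_\mathrm{fas,u_f}$ by $\epsilon_m$ produces the $\log_2\!\big(1+\tfrac{\overline{\gamma}p_\mathrm{u_f}\mu_\mathrm{c}\epsilon_m}{\overline{\gamma}p_\mathrm{u_n}\mu_\mathrm{c}\epsilon_m+1}\big)$ factor, the bracketed term $\big[\tfrac{1}{\Gamma(\varkappa_\mathrm{u_f})\varpi_\mathrm{u_f}^{\varkappa_\mathrm{u_f}}}\epsilon_m^{\varkappa_\mathrm{u_f}-1}\mathrm{e}^{-\epsilon_m/\varpi_\mathrm{u_f}}\big]^{N_\mathrm{u_f}}$, the $1/\sqrt{\det(\mathbf{R}_\mathrm{u_f})}$ normalization, and the exponential quadratic form in $\boldsymbol{\breve{\varphi}}^{-1}_{g_{\mathrm{eq},\mathrm{u_f}}^{N_\mathrm{u_f}}}$, where each entry of $\boldsymbol{\breve{\varphi}}^{-1}$ is $\sqrt{2}\,\mathrm{erf}^{-1}\!\big(\tfrac{2}{\Gamma(\varkappa_\mathrm{u_f})}\Upsilon(\varkappa_\mathrm{u_f},\tfrac{\epsilon_m}{\varpi_\mathrm{u_f}})-1\big)$, obtained by evaluating the quantile transform of the marginal CDF at $g=\epsilon_m$. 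Collecting these terms gives precisely \eqref{eq-ecr-uf}, completing the proof.

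Honestly, there is no real obstacle here: the result is a one-line corollary of Lemma~\ref{eq-lemma-glr} applied to Theorem-level expression \eqref{eq-c-uf}, and the argument is essentially identical to that of Corollary~\ref{thm-ecr-un}. The only point requiring minor care is bookkeeping — ensuring the quantile-vector $\boldsymbol{\breve{\varphi}}^{-1}$ is evaluated consistently at the quadrature node $\epsilon_m$ (not at a generic $g_\mathrm{fas}$), and noting that the Gaussian-copula density factor in \eqref{eq-pdf} is what supplies the $\exp\!\big(-\tfrac12(\boldsymbol{\breve{\varphi}}^{-1})^T(\mathbf{R}_\mathrm{u_f}^{-1}-\mathbf{I})\boldsymbol{\breve{\varphi}}^{-1}\big)/\sqrt{\det(\mathbf{R}_\mathrm{u_f})}$ structure. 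One should also remark that, as with any GLQ approximation, accuracy improves with the number of nodes $M$, and the approximation is exact in the limit $M\to\infty$; in practice a moderate $M$ suffices because the integrand decays exponentially.
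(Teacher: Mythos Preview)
Your proposal is correct and follows essentially the same approach as the paper: apply Lemma~\ref{eq-lemma-glr} to the integral expression \eqref{eq-c-uf} after inserting the PDF from \eqref{eq-pdf}, and simplify. The paper's own proof is in fact just a one-line pointer to this very procedure, so your write-up merely spells out the bookkeeping that the paper leaves implicit.
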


\begin{proof}
By applying Lemma \ref{eq-lemma-glr} to \eqref{eq-c-uf} and performing some simplifications, the proof is completed.
\end{proof}

\subsection{Sensing Performance}
To evaluate sensing performance, we exploit the concept of estimation SR as defined in \cite{chiriyath2015inner}. From an information-theoretic viewpoint, estimation SR is analogous to the data information rate in wireless communication systems. It represents the estimated rate of target parameter acquisition, where a higher SR indicates better performance in target detection. Therefore, given that the BS utilizes the received signal \eqref{eq-eco} to sense the TSR, the SR can be defined as
\begin{align}
R_\mathrm{t}\leq\frac{\beta}{2T}\log_2\left(1+2T\gamma_\mathrm{echo}\right),
\end{align}
in which $T$ denotes the radar pulse duration, $\beta$ defines the radar's duty cycle, and  $\gamma_\mathrm{echo}$ is the echo SNR, defined as 
\begin{align}
\gamma_\mathrm{echo}=\overline{\gamma}\zeta d_\mathrm{t,b}^{-\alpha} \left|h_{\mathrm{tsr,t}}\right|^2\sigma^2_\mathrm{echo}=\frac{\pi^2}{3}\overline{\gamma}\zeta d_\mathrm{t,b}^{-\alpha} g_{\mathrm{t,b}}g_{\mathrm{b,t}}\sigma^2_{t_\mathrm{df}}, \label{eq-snd-echo}
\end{align}
where $g_{\mathrm{t,b}}=\left|h_{\mathrm{t,b}}\right|^2$ and $g_{\mathrm{b,t}}=\left|h_{\mathrm{b,t}}\right|^2$ are the channel gains from the tag to the BS and from the BS to the tag, respectively. Then we define the ESR to quantify the average  estimation rate. This metric can be considered as the dual counterpart to the data information rate, denoted as
\begin{align}\notag
\overline{R}_\mathrm{t}&\leq\mathbb{E}\left[R_\mathrm{t}\right]\\
&=\frac{\beta}{2T}\int_0^\infty\log_2\left(1+2T\gamma_\mathrm{echo}\right)f_{\gamma_{\mathrm{echo}}}\left(\gamma_{\mathrm{echo}}\right)\mathrm{d}\gamma_\mathrm{echo}.\label{eq-eesr-def}
\end{align}

Note that the SR metric is characterized by the CRLB for estimation, defined as $\sigma^2_{\mathrm{est}}\leq\frac{\sigma^2_{t_\mathrm{df}}}{2T\gamma_\mathrm{echo}}$, where $\sigma^2_\mathrm{est}$  is the time-delay estimation. In this regard, the ESR is first defined as \begin{align}
R_\mathrm{t}\leq\frac{H_{y_\mathrm{b}}-H_\mathrm{est}}{T_\mathrm{bit}},
\end{align}
where $T_\mathrm{bit}=\frac{T}{\beta}$ represents the bits per pulse repetition interval, $H_{y_\mathrm{b}}$ and $H_\mathrm{est}$ denote the entropy of received signal and the entropy of errors, defined as $H_{y_\mathrm{b}}=\frac{1}{2}\log_2\left(2\pi\mathrm{e}\left(\sigma^2_{t_\mathrm{df}}+\sigma^2_\mathrm{est}\right)\right)$ and $H_\mathrm{est}=\frac{1}{2}\left(2\pi\mathrm{e}\sigma^2_\mathrm{est}\right)$, respectively. Therefore, the ESR is greatly affected by the radar's time-delay estimation, and it can be derived by exploiting the CRLB. 

\begin{theorem}\label{thm-eesr}
The ESR of the considered FAS-aided ISABC is given by
\begin{align}\label{eq-eesr}
\overline{R}_\mathrm{t}\leq\frac{\beta}{2T}\log_2\left(1+\frac{2}{3}T\overline{b}\overline{e}\pi^2\zeta d_\mathrm{t,b}^{-\alpha}\overline{\gamma}\sigma^2_{t_\mathrm{df}}\right),
\end{align}
where $\overline{e}=\mathbb{E}\left[g_\mathrm{t,b}\right]$.
\end{theorem}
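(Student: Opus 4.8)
The plan is to derive the ESR in \eqref{eq-eesr} by combining the entropy-based definition of the sensing rate with the CRLB and then taking the expectation over the fading of the tag sensing response. First I would start from the inequality $R_\mathrm{t}\le (H_{y_\mathrm{b}}-H_\mathrm{est})/T_\mathrm{bit}$ and substitute the stated entropies $H_{y_\mathrm{b}}=\tfrac12\log_2\!\big(2\pi\mathrm{e}(\sigma^2_{t_\mathrm{df}}+\sigma^2_\mathrm{est})\big)$ and $H_\mathrm{est}=\tfrac12\log_2\!\big(2\pi\mathrm{e}\,\sigma^2_\mathrm{est}\big)$; the $2\pi\mathrm{e}$ factors cancel and one is left with $R_\mathrm{t}\le \tfrac{\beta}{2T}\log_2\!\big(1+\sigma^2_{t_\mathrm{df}}/\sigma^2_\mathrm{est}\big)$. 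Then I would invoke the CRLB $\sigma^2_\mathrm{est}\le \sigma^2_{t_\mathrm{df}}/(2T\gamma_\mathrm{echo})$, so that $\sigma^2_{t_\mathrm{df}}/\sigma^2_\mathrm{est}\ge 2T\gamma_\mathrm{echo}$, recovering the form $R_\mathrm{t}\le \tfrac{\beta}{2T}\log_2(1+2T\gamma_\mathrm{echo})$ already quoted before the theorem. This ties the sensing-rate expression to the echo SNR $\gamma_\mathrm{echo}$ given in \eqref{eq-snd-echo}.

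Next I would take the expectation in \eqref{eq-eesr-def}. Rather than integrating against the (messy) density of $\gamma_\mathrm{echo}=\tfrac{\pi^2}{3}\overline{\gamma}\zeta d_\mathrm{t,b}^{-\alpha} g_{\mathrm{t,b}} g_{\mathrm{b,t}}\sigma^2_{t_\mathrm{df}}$, the key move is to apply Jensen's inequality to the concave function $\log_2(1+2T\,\cdot\,)$, which gives $\mathbb{E}[\log_2(1+2T\gamma_\mathrm{echo})]\le \log_2(1+2T\,\mathbb{E}[\gamma_\mathrm{echo}])$. Since the tag is a fixed-position antenna, $g_{\mathrm{t,b}}$ and $g_{\mathrm{b,t}}$ are independent, so $\mathbb{E}[\gamma_\mathrm{echo}]=\tfrac{\pi^2}{3}\overline{\gamma}\zeta d_\mathrm{t,b}^{-\alpha}\,\mathbb{E}[g_{\mathrm{t,b}}]\,\mathbb{E}[g_{\mathrm{b,t}}]\,\sigma^2_{t_\mathrm{df}} = \tfrac{\pi^2}{3}\overline{\gamma}\zeta d_\mathrm{t,b}^{-\alpha}\,\overline{e}\,\overline{b}\,\sigma^2_{t_\mathrm{df}}$, using the notation $\overline{b}=\mathbb{E}[g_\mathrm{b,t}]$ from Theorem \ref{thm-cdf} and $\overline{e}=\mathbb{E}[g_\mathrm{t,b}]$. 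Substituting this mean back and folding the constant $2T\cdot\tfrac{\pi^2}{3}=\tfrac{2}{3}T\pi^2$ yields exactly \eqref{eq-eesr}.

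The main obstacle, or rather the point that needs the most care, is the justification that replacing $\gamma_\mathrm{echo}$ by its mean inside the logarithm is legitimate as an \emph{upper} bound on $\overline{R}_\mathrm{t}$: this is where the concavity of $\log_2(1+2Tx)$ and Jensen's inequality must be explicitly stated, so that the chain of inequalities $\overline{R}_\mathrm{t}\le \mathbb{E}[R_\mathrm{t}] \le \tfrac{\beta}{2T}\log_2(1+2T\mathbb{E}[\gamma_\mathrm{echo}])$ is coherent with the bound already present in \eqref{eq-eesr-def}. A secondary point is to confirm the independence of the forward and backward tag links so that the expectation of the product factorizes; this follows from the Rayleigh-fading assumption on all links and the fixed-position nature of the tag. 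Once these two facts are in place, the remainder is just collecting path-loss, power, and duty-cycle constants, which is routine.
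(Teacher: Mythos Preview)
Your proposal is correct and shares the paper's core idea: apply Jensen's inequality to the concave map $x\mapsto\log_2(1+2Tx)$ so that $\overline{R}_\mathrm{t}\le\frac{\beta}{2T}\log_2\!\big(1+2T\,\mathbb{E}[\gamma_\mathrm{echo}]\big)$, then compute $\mathbb{E}[\gamma_\mathrm{echo}]$ and substitute. The one genuine difference is in how the mean $\mathbb{E}[\gamma_\mathrm{echo}]$ is obtained. You invoke independence of $g_{\mathrm{t,b}}$ and $g_{\mathrm{b,t}}$ to factor $\mathbb{E}[g_{\mathrm{t,b}}g_{\mathrm{b,t}}]=\overline{e}\,\overline{b}$ directly from \eqref{eq-snd-echo}; the paper instead first writes down the exact PDF of $\gamma_\mathrm{echo}$ (a scaled product of two exponentials, expressed via the modified Bessel function $\mathcal{K}_0$ as in \eqref{eq-pdf-d}) and then integrates $\int_0^\infty \gamma\, f_{\gamma_\mathrm{echo}}(\gamma)\,\mathrm{d}\gamma$ to recover the same mean. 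Your route is shorter and more elementary, since it bypasses the special-function density entirely; the paper's route has the incidental benefit of displaying $f_{\gamma_\mathrm{echo}}$ explicitly, which could be reused if one wanted the exact ESR rather than the Jensen upper bound. Your preliminary entropy/CRLB reconstruction of $R_\mathrm{t}\le\frac{\beta}{2T}\log_2(1+2T\gamma_\mathrm{echo})$ is not strictly required for the theorem (the paper takes that inequality as its starting point), but it is a reasonable consistency check.
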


\begin{proof}
See Appendix \ref{app-thm-eesr}.
\end{proof}
\vspace{0.5cm}
\begin{figure}[!t]
\centering
\includegraphics[width=0.9\columnwidth]{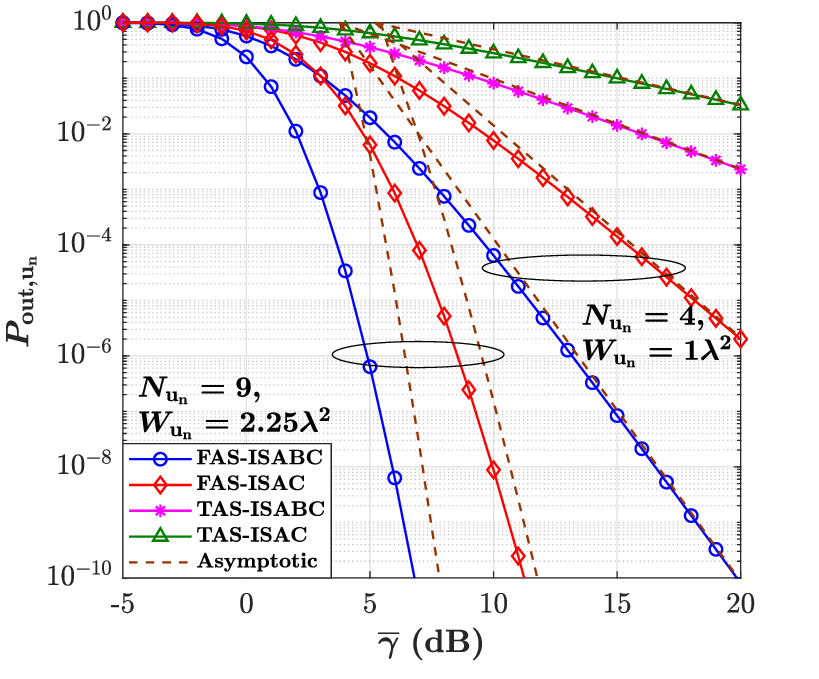}
\caption{OP versus average SNR $\overline{\gamma}$ for the near user $\mathrm{u_n}$, based on \eqref{eq-out-n}.}\label{fig_o_un}
\end{figure}

\begin{figure}[!t]
\centering
\includegraphics[width=0.9\columnwidth]{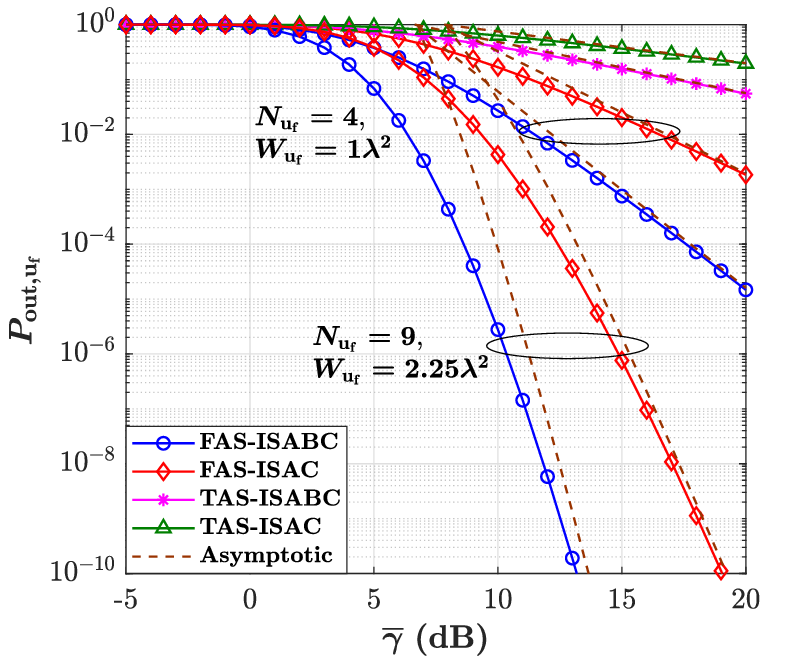}
\caption{OP versus average SNR $\overline{\gamma}$ for the far user $\mathrm{u_f}$, based on \eqref{eq-op2}.}\label{fig_o_uf}
\end{figure}

\begin{figure}[!t]
\centering
\includegraphics[width=0.9\columnwidth]{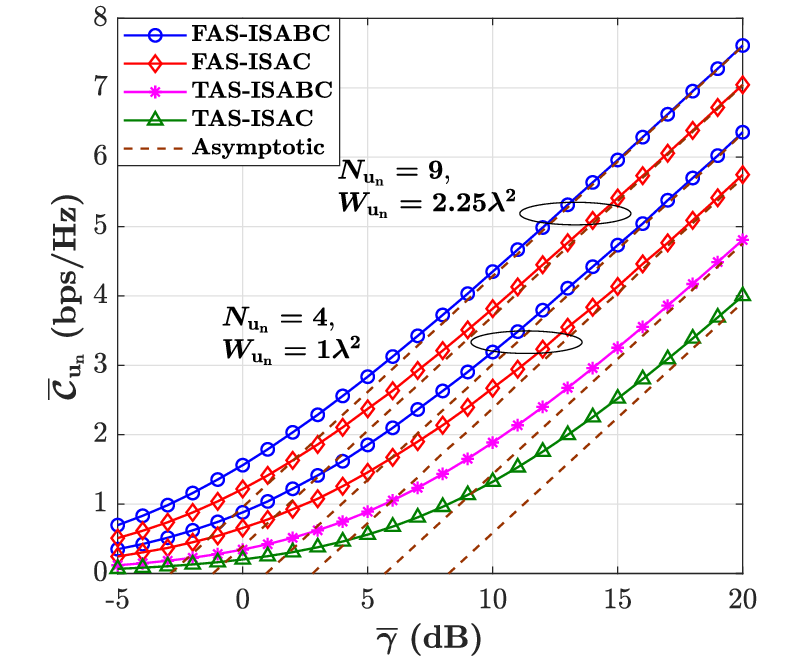}
\caption{ECR versus average SNR $\overline{\gamma}$ for the near user $\mathrm{u_n}$, based on \eqref{eq-ecr-un}.}\label{fig_c_un}
\end{figure}

\begin{figure}[!t]
\centering
\includegraphics[width=0.9\columnwidth]{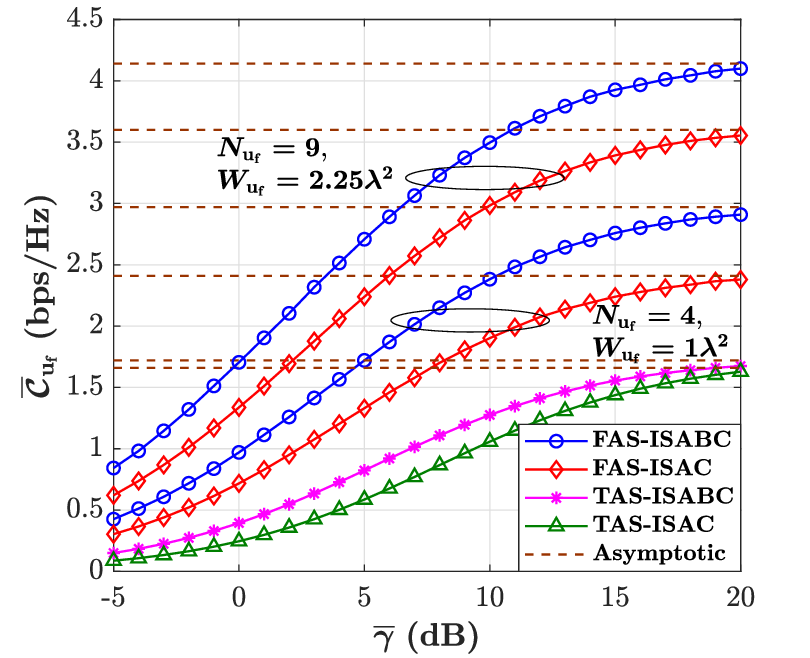}
\caption{ECR versus average SNR $\overline{\gamma}$ for the far user $\mathrm{u_f}$, based on \eqref{eq-ecr-uf}.}\label{fig_c_uf}
\end{figure}

\begin{figure}[!t]
\centering
\includegraphics[width=0.9\columnwidth]{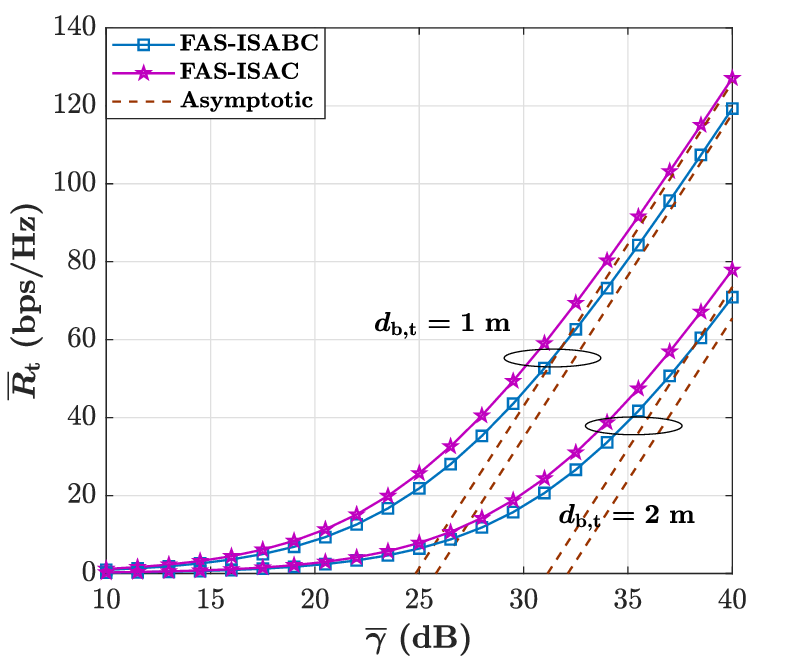}
\caption{ESR versus average SNR $\overline{\gamma}$ for different values of $d_\mathrm{b,t}$, based on \eqref{eq-eesr}.}\label{fig_r_s}
\end{figure}

\begin{figure}[!t]
\centering
\includegraphics[width=0.9\columnwidth]{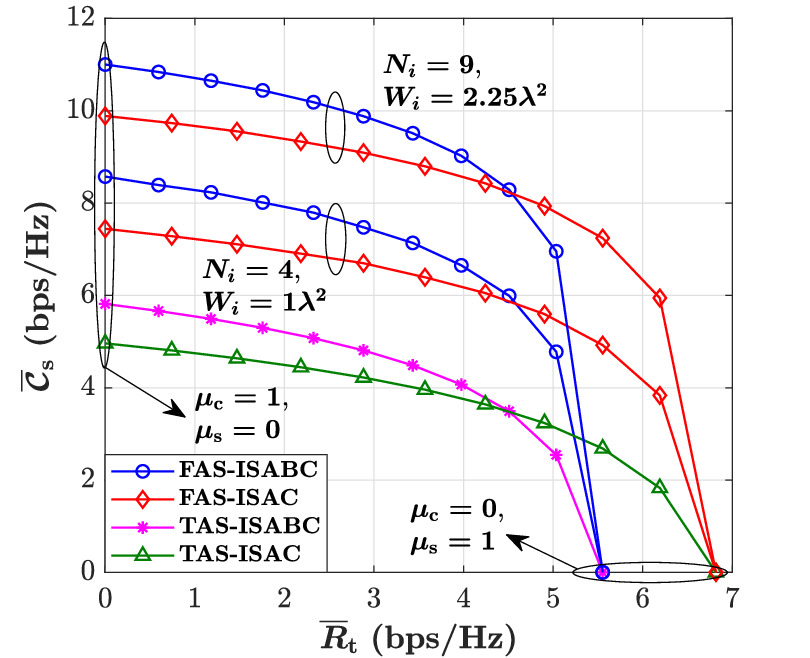}
\caption{ECR versus ESR for different $\mu$ and a given value of $\overline{\gamma}=15$ dB.}\label{fig_c_r}
\end{figure}

\section{Numerical Results}\label{sec-num}
In this section, our derived analytical results are evaluated, which are double-checked in all instances by Monte Carlo simulations. To this end, we set the parameters as $p_\mathrm{u_n}=0.3$, $p_\mathrm{u_f}=0.7$, $\mu_\mathrm{c}=\mu_\mathrm{s}=0.5$, $\zeta=0.8$, $\overline{a}=\overline{b}=\overline{c}=\overline{e}=1$, $\hat{\gamma}_\mathrm{u_f}=\hat{\gamma}_\mathrm{u_n}=\hat{\gamma}_\mathrm{u_{sic}}=0$ dB, $d_\mathrm{b,t}=d_\mathrm{b,u_n}=d_\mathrm{b,u_f}=1$ m, $d_\mathrm{b,u_f}=1.3$ m, $\alpha=2.1$,  $\beta=0.5$, $\sigma^2_{t_\mathrm{df}}=0.1$, and $T=0.01$s. Furthermore, we utilize Algorithm 1 in \cite{ghadi2023gaussian} to simulate the Gaussian copula and its corresponding density functions.

We considered the following benchmarks for comparison:
\begin{itemize}
\item FAS-ISAC: ISAC with FAS-equipped users. This can be theoretically achieved by setting $\zeta=0$ in \eqref{eq-yc} for communication performance without a backscatter tag and by setting $\zeta=1$ in \eqref{eq-eco} for sensing performance with no reflection loss at the tag.
\item TAS-ISABC: ISABC with TAS-equipped users.
\item TAS-ISAC: ISAC with TAS-equipped users.
\end{itemize}

The OP performance againts the average SNR $\overline{\gamma}$ for near and far FAS-equipped users $u_\mathrm{n}$ and $u_\mathrm{f}$ are illustrated in Figs.~\ref{fig_o_un} and \ref{fig_o_uf}, respectively, where the asymptotic results closely match the numerical results at high SNRs. First of all, in both figures, it can be observed that the proposed ISABC model provides a lower OP when the NOMA users are equipped with FAS rather than TAS. This is mainly due to the dynamic reconfigurability feature of FAS, allowing for real-time optimization of the antenna's performance based on the current environment and requirements. In this regard, we can also see that increasing the number of FAS ports $N_i$ and the FAS size $W_i$ significantly improves the OP performance for both NOMA users. Moreover, it is observed that the FAS-ISABC model improves the OP performance compared to ISAC with FAS-equipped users. This indicates that considering the backscatter tag is significantly more beneficial for communication performance than for a conventional target. For instance, for FAS-equipped NOMA users $\mathrm{u_n}$ and $\mathrm{u_f}$ with $N_i=4$ and $W_i=1\lambda^2$ at $\overline{\gamma}=10$dB, the OP under conventional ISAC is on the order of $10^{-2}$ and $10^{-1}$, respectively, while the OP for ISABC-FAS is on the order of $10^{-4}$ and $10^{-2}$. Furthermore, the asymptotic results indicate the accuracy of our theoretical analysis in the high SNR regime for both users.

In Figs.~\ref{fig_c_un} and \ref{fig_c_uf}, the ECR performance versus $\overline{\gamma}$ is depicted for the near and far users $\mathrm{u_n}$ and $\mathrm{u_f}$, respectively. In both figures, the results show that the FAS-ISABC achieves higher ECR values compared to FAS-ISAC. This improvement is expected since the backscatter tag transmits additional data to the NOMA users, which can provide a higher CR for NOMA users. By carefully comparing the curves, we find that for the FAS-equipped users $\mathrm{u_n}$ and $\mathrm{u_f}$ having $N_i=4$ and $W_i=1\lambda^2$ at $\overline{\gamma}=10$ dB, the ECR under ISABC is almost $15\%$ and $20\%$ higher than the ECR under FAS-ISAC for users $\mathrm{u_n}$ and $\mathrm{u_f}$, respectively. Furthermore, the results indicate that the FAS has significant effects on both the ISABC and conventional ISAC scenarios. For instance, under the same FAS configuration setup, the ECR provided by the FAS-ISABC for users $\mathrm{u_n}$ and $\mathrm{u_f}$ is almost $41\%$ and $47\%$ higher than that of TAS-ISABC for $\mathrm{u_n}$ and $\mathrm{u_f}$, respectively. Additionally, we see that the FAS-ISAC provides $50\%$ and $45\%$ higher ECR compared to TAS-ISAC for users $\mathrm{u_n}$ and $\mathrm{u_f}$, respectively. Furthermore, it is worth mentioning that such improvement can be enhanced when higher values of $N_i$ and $W_i$ are considered for the FAS setup or higher $\zeta$ is assumed for the backscatter tag over the communication stage. Moreover, at high SNR, we exploit the approximation of $\log_2\left(1+\textrm{SNR}\right)\approx\log_2\left(\textrm{SNR}\right)$ in \eqref{eq-ecr-un} and \eqref{eq-ecr-uf} to plot the asymptotic behaviour of the ECR. We see that the asymptotic curves reach the ECR at high SNR. 

The ESR performance in terms of $\overline{\gamma}$ is illustrated in Fig.~\ref{fig_r_s}. It can be observed that the ISAC outperforms ISABC, as there is no reflection loss at the target, i.e., $\zeta=1$. Additionally, as expected, the ESR decreases as the backscatter target is located farther away from the BS due to more severe path loss effects. Moreover, by using $\log_2\left(1+\textrm{SNR}\right)\approx\log_2\left(\textrm{SNR}\right)$ for the high SNR regime in \eqref{eq-eesr}, the asymptotic results are provided that shows a perfect match with the ESR. 

Fig.~\ref{fig_c_r} characterizes the performance interplay between the ECR and the ESR for a given average SNR $\overline{\gamma}$. Specifically, by allocating different power coefficients $\mu_\chi$ at the BS for communication and sensing stages, we plot the sum of ECR (i.e., $\overline{\mathcal{C}}_\mathrm{s}=\overline{\mathcal{C}}_\mathrm{u_n}+\overline{\mathcal{C}}_\mathrm{u_f}$) for both NOMA users in terms of the ESR. To do so, we set $\mu_\mathrm{c}=\mu$, and therefore $\mu_\mathrm{s}=1-\mu$, and change the power coefficient $\mu$ from $0$ to $1$. In this regard, by carefully comparing the curves, it can be observed that both ergodic rates (i.e, communication and sensing) achieved by TAS-ISABC and TAS-ISAC are always lower than those achievable by FAS-ISABC and FAS-ISAC, respectively, which confirms the superiority of FAS than TAS.

Furthermore, FAS-ISABC and TAS-ISABC perform better (worse) in terms of the communication (sensing) rate compared with conventional FAS-ISAC and TAS-ISAC, respectively. Therefore, for both the ISABC and ISAC scenarios, this specific  behaviour highly depends on the value of power coefficient at the BS, where larger (smaller) $\mu$ causes  higher (lower) CR and lower (higher) SR. More precisely, the ESR is zero when $\mu=1$ since all the power is utilized for communication stage and no power is considered for target detection. Thus, we reach non-zero ECR with zero ESR. Additionally, by assuming $\mu=0$, the highest value of ESR is achieved, while the ECR cannot be reduced to zero due to the ISAC principle. Hence, it is concluded that without any CR constraints, the SR is maximized with the available transmit power. Conversely, when NOMA users require a non-zero CR, the power coefficient tends to diminish the SR.

\section{Conclusion}\label{sec-con}
In this paper, we studied the performance of a FAS-aided ISABC system under NOMA. Specifically, we considered an ISABC scenario in which a backscatter tag provides sensing information to the BS while simultaneously transmitting additional data to the FAS-equipped NOMA users. For the proposed model, we first derived closed-form expressions for the CDF and PDF of the equivalent channel at both NOMA users using moment matching techniques and the Gaussian copula. Then, we obtained closed-form expressions for the OP and the corresponding asymptotic derivations in the high SNR regime. Additionally, we approximated the ECR using the numerical integration GLQ technique. Moreover, we derived the ESR in closed form by utilizing the CRLB. Eventually, our numerical results provided useful insights into achieving an accurate trade-off between SR and CR. Furthermore, they showed that FAS significantly enhances the overall performance of both conventional ISAC and ISABC compared to TAS. 

\appendices
\section{Proof of Theorem \ref{thm-cdf}}\label{app-thm-cdf}
Given \eqref{eq-yc}, the equivalent channel at the $n_i$-th port of user $i$ is given by $g_{\mathrm{eq},i}^{n_i}=d_{\mathrm{b},i}^{-\alpha}g_{\mathrm{b},i}^{n_i}+\zeta d_{\mathrm{b,t}}^{-\alpha}d_{\mathrm{t},i}^{-\alpha} g_\mathrm{b,t}g^{n_i}_{\mathrm{t},i}$. Without loss of generality and for ease of display, we consider
\begin{align}
g_{\mathrm{eq},i}^{n_i}=d_{\mathrm{b},i}^{-\alpha}A+\zeta d_{\mathrm{b,t}}^{-\alpha}d_{\mathrm{t},i}^{-\alpha}BC=A'+D,
\end{align}
where, given that all channels undergo Rayleigh fading, the channel gains follow exponential distributions. Hence, we have the PDF and CDF for the channel gain $R\in\left\{A,B,C\right\}$ as
\begin{equation}\label{eq-cdf-marg}
\left\{\begin{aligned}
f_R(r)&=\frac{1}{\overline{r}}\exp\left(-\frac{r}{\overline{r}}\right),\\
F_R(r)&=1-\exp\left(-\frac{r}{\overline{r}}\right), 
\end{aligned}\right.
\end{equation}
in which $\overline{r}=\mathbb{E}\left[R\right]$ is the expected value of the RV $R$ for $r\in\left\{a,b,c\right\}$. To derive the distribution of $g_{\mathrm{eq},i}^{n_i}$, we first obtain the distribution RV $D=\zeta d_{\mathrm{b,t}}^{-\alpha}d_{\mathrm{t},i}^{-\alpha}BC$, which  involves the product of two exponentially-distributed RVs. Hence, by using the definition, we have
\begin{align}
F_{D}(d)&\hspace{1mm}=\int_0^\infty f_{B}\left(b\right)F_{C}\left(\frac{d_{\mathrm{b,t}}^{\alpha}d_{\mathrm{t},i}^{\alpha}d}{\zeta b}\right)\mathrm{d}b\\
&\overset{(a')}{=}1-\frac{1}{\overline{b}}\int_0^\infty \exp\left(-\left(\frac{b}{\overline{b}}+\frac{d_{\mathrm{b,t}}^{\alpha}d_{\mathrm{t},i}^{\alpha}d}{\zeta b\overline{c}}\right)\right)\mathrm{d}b\\
&\overset{(b')}{=}1-2\sqrt{\frac{d_{\mathrm{b,t}}^{\alpha}d_{\mathrm{t},i}^{\alpha}d}{\zeta\overline{b}\overline{c}}}\mathcal{K}_1\left(2\sqrt{\frac{d_{\mathrm{b,t}}^{\alpha}d_{\mathrm{t},i}^{\alpha}d}{\zeta\overline{b}\overline{c}}}\right),\label{eq-bessel}
\end{align}
where $(a')$ is obtained by considering the univariate marginal distributions from \eqref{eq-cdf-marg} and $(b')$ is derived by utilizing the integral format provided in \cite[3.471.9]{gradshteyn2007table}. In \eqref{eq-bessel}, $\mathcal{K}_\nu(\cdot)$ represents the $\nu$-order modified Bessel function
of the second kind. Then, by utilizing the PDF definition $f(d)=\frac{\partial }{\partial d}F_D(d)$, the corresponding PDF can be formulated as
\begin{align}\label{eq-pdf-d}
f_{D}(d)=\frac{2d_{\mathrm{b,t}}^{\alpha}d_{\mathrm{t},i}^{\alpha}}{\zeta\overline{b}\overline{c}}\mathcal{K}_0\left(2\sqrt{\frac{d_{\mathrm{b,t}}^{\alpha}d_{\mathrm{t},i}^{\alpha}d}{\zeta\overline{b}\overline{c}}}\right).
\end{align}
Next, the distribution of $g_{\mathrm{eq},i}^{n_i}=A'+D$, which includes the sum of two independent RVs, can be defined as
\begin{align}
F_{g_{\mathrm{eq},i}^{n_i}}\left(g_\mathrm{eq}\right)=\int_0^\infty F_{A'}\left(g_\mathrm{eq}-d\right)f_{D}\left(d\right)\mathrm{d}d \label{eq-app1}. 
\end{align}
However, deriving an exact statistical characterization of the CDF in \eqref{eq-app1} seems challenging. To overcome this issue, we resort to the moment matching technique to approximate the corresponding CDF \cite{al2010approximation}. In this context, the Gamma distribution is often utilized to approximate complex distributions due to its simplicity, characterized only by two adjustable parameters: the shape parameter $\varkappa$ and the scale parameter $\varpi$. Thus, the mean and variance of such approximation Gamma distribution are defined as $\varkappa\varpi$ and $\varkappa\varpi^2$, respectively. 

Now, let us attempt to find the CDF of $g_{\mathrm{eq},i}^{n_i}$ using the moment matching technique. By  definition, the mean and variance of $D$ can be calculated as $\mathbb{E}\left[D\right]=\zeta d_{\mathrm{b,t}}^{-\alpha}d_{\mathrm{t},i}^{-\alpha} \overline{b}\overline{c}$ and $\mathrm{Var}\left(D\right)=3\left(\zeta d_{\mathrm{b,t}}^{-\alpha}d_{\mathrm{t},i}^{-\alpha}\overline{b}\overline{c}\right)^2$, respectively. Similarly, the mean and variance of $A'$ are given by $\mathbb{E}\left[A'\right]=d_{\mathrm{b},i}^{-\alpha}\overline{a}$ and $\mathrm{Var}\left(A'\right)=d_{\mathrm{b},i}^{-2\alpha}\overline{a}^2$, respectively. Now, by matching the mean and variance of the RV $g_{\mathrm{eq},i}^{n_i}=A'+D$ with the mean $\varkappa_i\varpi_i$ and variance $\varkappa_i\varpi_i^2$ of the Gamma distribution, we have 
\begin{align}\label{eq-cdf-eq}
F_{g_{\mathrm{eq},i}^{n_i}}\left(g_\mathrm{eq}\right)\approx\frac{1}{\Gamma\left(\varkappa_i\right)}\Upsilon\left(\varkappa_i,\frac{g_\mathrm{eq}}{\varpi_i}\right),
\end{align}
in which 
\begin{align}
\varkappa_i=\frac{d_{\mathrm{b},i}^{-2\alpha}\overline{a}^2+3\left(\zeta d_{\mathrm{b,t}}^{-\alpha}d_{\mathrm{t},i}^{-\alpha}\overline{b}\overline{c}\right)^2}{d_{\mathrm{b},i}^{-\alpha}\overline{a}+\zeta d_{\mathrm{b,t}}^{-\alpha}d_{\mathrm{t},i}^{-\alpha}\overline{b}\overline{c}}
\end{align}
and
\begin{align} \varpi_i=\frac{\left(d_{\mathrm{b},i}^{-\alpha}\overline{a}+\zeta d_{\mathrm{b,t}}^{-\alpha}d_{\mathrm{t},i}^{-\alpha}\overline{b}\overline{c}\right)^2}{d_{\mathrm{b},i}^{-2\alpha}\overline{a}^2+3\left(\zeta d_{\mathrm{b,t}}^{-\alpha}d_{\mathrm{t},i}^{-\alpha}\overline{b}\overline{c}\right)^2}.
\end{align}
Furthermore, by extending the joint CDF definition, the CDF of $g_{\mathrm{fas},i}$ can be mathematically expressed as
\begin{align}
&F_{\mathrm{fas},i}\left(g_\mathrm{fas}\right)\notag\\
&=\Pr\left(\max\left\{g_{\mathrm{eq},i}^1,\dots,g_{\mathrm{eq},i}^{N_i}\right\}\leq g_\mathrm{fas}\right)\\
&=\Pr\left(g_{\mathrm{eq},i}^1\leq g_\mathrm{fas},\dots,g_{\mathrm{eq},i}^{N_i}\leq g_\mathrm{fas}\right)\\
&=F_{g_{\mathrm{eq},i}^1,\dots,g_{\mathrm{eq},i}^{N_i}}\left(g_\mathrm{fas},\dots,g_\mathrm{fas}\right)\\
&\hspace{-0.5ex}\overset{(c')}{=}C\left(F_{g_{\mathrm{eq},i}^1}\left(g_\mathrm{fas}\right),\dots,F_{g_{\mathrm{eq},i}^{N_i}}\left(g_\mathrm{fas}\right);\vartheta_i\right)\\
&\hspace{-0.5ex}\overset{(d')}{=}\mathbf{\Phi}_{\mathbf{R}_i}\left(\varphi^{-1}\left(F_{g_{\mathrm{eq},i}^1}\left(g_\mathrm{fas}\right)\right),\dots,\varphi^{-1}\left(F_{g_{\mathrm{eq},i}^{N_i}}\left(g_\mathrm{fas}\right)\right);\eta_i\right),\label{eq-app3}
\end{align}
where $(c')$ is obtained by using Sklar's theorem from \eqref{eq-sklar}, and $(d')$ is determined by considering the definition of Gaussian copula, where $\varphi^{-1}(\cdot)$ is defined as
\begin{align}
	\varphi^{-1}\left(F_{g_{\mathrm{eq},i}^{n_i}}\left(g_\mathrm{fas}\right)\right)=\sqrt{2}\mathrm{erf}^{-1}\left(2F_{g_{\mathrm{eq},i}^{n_i}}\left(g_\mathrm{fas}\right)-1\right).\label{eq-app4}
\end{align}
Eventually, by substituting \eqref{eq-cdf-eq} into \eqref{eq-app4}, and then inserting the results into \eqref{eq-app3}, we derive \eqref{eq-cdf} and complete the proof.

\section{Proof of Theorem \ref{thm-pdf}}\label{app-thm-pdf}
By applying the chain rule, the PDF of $g_\mathrm{fas}$ can be expressed by the product of the marginal PDFs and the Gaussian copula density function \cite{ghadi2023gaussian} as
\begin{align}\notag
&f_{\mathrm{fas},i}\left(g_\mathrm{fas}\right)\\
&= c_\mathrm{G}\left(F_{g_{\mathrm{eq},i}^1}\left(g_\mathrm{fas}\right),\dots,F_{g_{\mathrm{eq},i}^{N_i}}\left(g_\mathrm{fas}\right);\eta\right)\prod_{n_i=1}^{N_i} f_{g_{\mathrm{eq},i}^{n_i}}\left(g_\mathrm{fas}\right).\label{eq-app6}
\end{align}
Hence, we need to derive the PDF of $g_{\mathrm{eq},i}^{n_i}$. By applying the PDF definition $f_X(x)=\frac{\partial}{\partial X}F_X(x)$ to \eqref{eq-cdf-eq}, we have
\begin{align}
f_{g_{\mathrm{eq},i}^{n_i}}\left(g_\mathrm{eq}\right)=\frac{1}{\Gamma\left(\varkappa\right)\varpi^\varkappa}g_\mathrm{eq}^{\varkappa-1}\mathrm{e}^{-\frac{g_\mathrm{eq}}{\varpi}}.\label{eq-app5}
\end{align}
Now, by substituting \eqref{eq-copula-g}, \eqref{eq-cdf-eq}, and \eqref{eq-app5} into \eqref{eq-app6}, we get the desired result and the proof is accomplished.

\section{Proof of Theorem \ref{thm-eesr}}\label{app-thm-eesr}
In order to derive $\overline{R}_\mathrm{t}$, we need to solve the integral in \eqref{eq-eesr-def}, which  mathematically defines the ESR. To accomplish this, we apply  Jensen's inequality so that \eqref{eq-eesr-def} can be rewritten as
\begin{align}
\overline{R}\leq\frac{\beta}{2T}\log_2\left(1+2T\mathbb{E}\left[\gamma_\mathrm{echo}\right]\right), \label{eq-app7}
\end{align}
where $\mathbb{E}\left[\gamma_{\mathrm{echo}}\right]$ is the expected value of $\gamma_{\mathrm{echo}}$.  Next, we need to find the PDF of $\gamma_\mathrm{echo}$. In \eqref{eq-snd-echo}, since  $\gamma_\mathrm{echo}$ involves the product of two exponentially-distributed channel gains $g_\mathrm{b,t}$ and $g_\mathrm{t,b}$, the PDF of $\gamma_\mathrm{echo}$ can be directly derived from \eqref{eq-pdf-d}, i.e., 
\begin{align}
f_{\gamma_\mathrm{echo}}\left(\gamma_\mathrm{echo}\right)=\frac{6d_\mathrm{t,b}^{\alpha}}{\overline{b}\overline{e}\pi^2\overline{\gamma}\zeta\sigma^2_{t_\mathrm{df}}}\mathcal{K}_0\left(2\sqrt{\frac{3\gamma_{\mathrm{echo}}d_\mathrm{t,b}^{\alpha}}{\overline{b}\overline{e}\pi^2\overline{\gamma}\zeta\sigma^2_{t_\mathrm{df}}}}\right),\label{eq-echo}
\end{align}
where $\overline{e}=\mathbb{E}\left[g_\mathrm{t,b}\right]$. Hence, $\mathbb{E}\left[\gamma_{\mathrm{echo}}\right]$ is given by
\begin{align}
\mathbb{E}\left[\gamma_{\mathrm{echo}}\right]&=\int_0^\infty \frac{6\gamma_{\mathrm{echo}}d_\mathrm{t,b}^{\alpha}}{\overline{b}\overline{e}\pi^2\overline{\gamma}\zeta\sigma^2_{t_\mathrm{df}}}\mathcal{K}_0\left(2\sqrt{\frac{3\gamma_{\mathrm{echo}}d_\mathrm{t,b}^{\alpha}}{\overline{b}\overline{e}\pi^2\overline{\gamma}\zeta\sigma^2_{t_\mathrm{df}}}}\right)\mathrm{d}\gamma_{\mathrm{echo}}\\
&=\frac{1}{3}\overline{b}\overline{e}\pi^2\overline{\gamma}\zeta d_\mathrm{t,b}^{-\alpha}\sigma^2_{t_\mathrm{df}}. \label{eq-app8}
\end{align}
Now, by substituting \eqref{eq-app8} into \eqref{eq-app7}, \eqref{eq-eesr} is achieved.

\bibliographystyle{IEEEtran}


\end{document}